\newcommand{\Tot}{\mathsf{b}}
\newcommand{\ValScript}{\mathsf{v}}
\newcommand{\Weak}[1]{#1_\mathsf{w}}
\newcommand{\Betav}{\beta^v}
\newcommand{\Derel}{\mathsf{d}}
\newcommand{\Der}[1]{\mathrm{der}\,#1}
\newcommand{\Derp}[1]{\mathrm{der}(#1)}
\newcommand{\App}[2]{\langle #1 \rangle #2}
\newcommand{\La}[2]{\lambda #1 \, #2}
\newcommand{\Bang}[1]{#1^\oc}
\newcommand{\Bangp}[1]{(#1)^\oc}
\newcommand{\BangSet}{\oc\Lambda}
\newcommand{\BangSubset}[1]{\oc\Lambda_{#1}}
\newcommand{\ValueSet}{\oc\Lambda_\ValScript}
\newcommand{\VarSet}{\mathcal{V}\!\,ar}
\newcommand{\CtxSet}{\BangSubset{\Ctx}}
\newcommand{\WCtxSet}{\BangSubset{\WCtx}}
\newcommand{\LambdaVal}{\Lambda_v}
\newcommand{\LambdaCtx}{\Lambda_\LCtx}
\newcommand{\LL}{\mathsf{LL}}
\newcommand{\Fv}[1]{\mathsf{fv}(#1)}
\newcommand{\Sub}[2]{\{#1/#2\}}
\newcommand{\Rule}{\mathsf{r}}
\newcommand{\Root}[1]{\mapsto_{#1}}
\newcommand{\To}[1]{\to_{#1}}
\newcommand{\RevTo}[1]{{\,}_{#1}\!\!\leftarrow}
\newcommand{\MRevTo}[1]{{\,}_{#1}^*\!\!\leftarrow}
\newcommand{\WTo}[1]{\to_{\Weak{#1}}}
\newcommand{\ToBang}{\To{\oc}}
\newcommand{\ToVal}{\To{\ValScript}}
\newcommand{\WToBang}{\WTo{\oc}}
\newcommand{\WToVal}{\WTo{\ValScript}}
\newcommand{\ToTot}{\To{\Tot}}
\newcommand{\WToTot}{\WTo{\Tot}}
\newcommand{\Parallel}[1]{\Rightarrow_{#1}}
\newcommand{\ParallelVal}{\Parallel{\ValScript}}
\newcommand{\ToBeta}{\To{\beta}}
\newcommand{\ToBetav}{\To{\Betav}}
\newcommand{\WToBeta}{\WTo{\beta}}
\newcommand{\WToBetav}{\WTo{\Betav}}
\newcommand{\Var}{x}
\newcommand{\VarTwo}{y}
\newcommand{\VarThree}{z}
\newcommand{\var}{x}
\newcommand{\Tm}{T}
\newcommand{\TmTwo}{S}
\newcommand{\TmThree}{R}
\newcommand{\TmFour}{Q}
\newcommand{\TmFive}{P}
\newcommand{\Val}{V}
\newcommand{\ValTwo}{W}
\newcommand{\tm}{t}
\newcommand{\tmTwo}{s}
\newcommand{\tmThree}{r}
\newcommand{\tmFour}{q}
\newcommand{\tmFive}{p}
\newcommand{\ImTmOne}{M}
\newcommand{\ImTmTwo}{N}
\newcommand{\ImVal}{U}
\newcommand{\LVal}{V}
\newcommand{\LTm}{\Tm}
\newcommand{\LTmTwo}{\TmTwo}
\newcommand{\LTmThree}{\TmThree}
\newcommand{\LTmFour}{\TmFour}
\newcommand{\LTmFive}{\TmFive}
\newcommand{\Hole}[1]{\llparenthesis #1 \rrparenthesis}
\newcommand{\Ctx}{\mathtt{C}}
\newcommand{\WCtx}{\mathtt{W}}
\newcommand{\Ctxp}[1]{\Ctx\Hole{#1}}
\newcommand{\WCtxp}[1]{\WCtx\Hole{#1}}
\newcommand{\NCtx}{\mathtt{N}}
\newcommand{\VCtx}{\mathtt{V}}
\newcommand{\NCtxp}[1]{\NCtx\Hole{#1}}
\newcommand{\VCtxp}[1]{\VCtx\Hole{#1}}
\newcommand{\LCtx}{\Ctx}
\newcommand{\CBN}[1]{{#1^\mathsf{CBN}}}
\newcommand{\CBV}[1]{{#1^\mathsf{CBV}}}
\newcommand{\Cbn}[1]{{#1^\mathsf{cbn}}}
\newcommand{\Cbv}[1]{{#1^\mathsf{cbv}}}
\newcommand{\CbN}{CbN}
\newcommand{\CbV}{CbV}
\newcommand{\InvV}[1]{#1^\dagger}
\newcommand{\Defeq}{\coloneqq}
\newcommand{\Eqdef}{\eqqcolon}
\newcommand{\Refprop}[1]{Prop.\,\ref{prop:#1}}
\newcommand{\Refpropp}[2]{Prop.\,\ref{prop:#1}.\ref{prop:#1.#2}}
\newcommand{\Reflemma}[1]{Lemma~\ref{lemma:#1}}
\newcommand{\Reflemmap}[2]{Lemma~\ref{lemma:#1}.\ref{lemma:#1.#2}}
\newcommand{\Refthm}[1]{Thm.\,\ref{thm:#1}}
\newcommand{\Refthmp}[2]{Thm.\,\ref{thm:#1}.\ref{thm:#1.#2}}
\newcommand{\Refthmps}[3]{Thm.\,\ref{thm:#1}.\ref{thm:#1.#2}-\ref{thm:#1.#3}}
\newcommand{\Refcor}[1]{Cor.\,\ref{cor:#1}}
\newcommand{\Refcorp}[2]{Cor.\,\ref{cor:#1}.\ref{cor:#1.#2}}
\newcommand{\Refrmk}[1]{Rmk.\,\ref{rmk:#1}}
\newcommand{\Refex}[1]{Ex.\,\ref{ex:#1}}
\newcommand{\Reffig}[1]{Fig.\,\ref{fig-#1}}
\newcommand{\NoteProof}[1]{\marginpar{\RaggedRight \ \ \scriptsize{Proof p.\,{\pageref{#1}}}}}
\newcommand\ITens{\otimes}
\newcommand\Tens[2]{{#1}\ITens{#2}}
\newcommand\Tensp[2]{\left({#1}\ITens{#2}\right)}
\newcommand\IWith{\mathrel{\&}}
\newcommand\With[2]{{#1}\IWith{#2}}
\newcommand\IPlus{\oplus}
\newcommand\Orth[2][]{#2^{\Bot_{#1}}}
\newcommand\Orthp[2][]{(#2)^{\Bot_{#1}}}
\newcommand\cL{\mathcal L}
\newcommand\cU{\mathcal U}
\newcommand\One{1}
\newcommand\Leftu{\lambda}
\newcommand\Rightu{\rho}
\newcommand\Assoc{\alpha}
\newcommand\Sym{\sigma}
\newcommand\Limpl[2]{{#1}\multimap{#2}}
\newcommand\Evlin{\operatorname{\mathsf{ev}}}
\renewcommand{\Bot}{{\mathord{\perp}}}
\newcommand{\Top}{\top}
\newcommand{\Zero}{0}
\newcommand\Proj[1]{{\mathsf{pr}}_{#1}}
\newcommand\Inj[1]{{\mathsf{in}}_{#1}}
\newcommand\Excl[1]{\oc{#1}}
\newcommand\Dercat[1]{\operatorname{{\mathsf{der}}}_{#1}}
\newcommand\Diggcat[1]{\operatorname{{\mathsf{dig}}}_{#1}}
\newcommand\Seely{\mathsf{m}}
\newcommand\Seelyz{\Seely^0}
\newcommand\Seelyt{\Seely^2}
\newcommand\Curlin{\operatorname{\mathsf{cur}}}
\newcommand\Cur[1]{\Lambda(#1)}
\newcommand\Op[1]{{#1}^{\mathsf{op}}}
\newcommand\Kl[1]{{#1}_\oc}
\newcommand\Compl{\,}
\newcommand\Id{\operatorname{\mathsf{id}}}
\newcommand{\Ie}{\textit{i.e.}\xspace}
\newcommand{\Ih}{\textit{i.h.}\xspace}
\newcommand{\Eg}{\textit{e.g.}\xspace}
\newcommand{\Resp}{\textnormal{resp.}\xspace}
\newcommand{\Etc}{\textit{etc.}\xspace}
\newcommand\Isom\simeq
\newcommand\Vect[1]{\vec{#1}}
\newcommand\List[3]{#1_{#2},\dots,#1_{#3}}
\newcommand\Psem[1]{\llbracket{#1}\rrbracket}
\newcommand\Tpower[2]{{#1}^{\otimes{#2}}}
\newcommand\Contr[1]{\operatorname{\mathsf{c}}_{#1}}
\newcommand\Weakm[1]{\operatorname{\mathsf{w}}_{#1}}
\newcommand\Promp[1]{\left(#1\right)^!}
\newcommand\Zerom[2]{0_{#1,#2}}
\newcommand\Coalgm[1]{\mathsf h_{#1}}
\newcommand\Pair[2]{\langle{#1},{#2}\rangle}
\newcommand\Supp[1]{\mathsf{supp}(#1)}
\renewcommand{\NoteProof}[1]{
\marginnote{
\scriptsize{Proof p.\,{\pageref{#1}}}}}
\newcommand{\lto}{\multimap}
\renewcommand{\Val}{\Bang{\Tm}}
\renewcommand{\ValTwo}{\Bang{\TmTwo}}
\newcommand{\ValThree}{\Bang{\TmThree}}
\renewcommand{\LVal}{v}
\renewcommand{\LTm}{\tm}
\renewcommand{\LTmTwo}{\tmTwo}
\renewcommand{\LTmThree}{\tmThree}
\renewcommand{\LTmFour}{\tmFour}
\renewcommand{\LTmFive}{\tmFive}
\renewcommand{\ToBang}{\To{\Derel}}
\renewcommand{\WToBang}{\WTo{\Derel}}
\newcommand{\ToDer}{\ToBang}
\newcommand{\WToDer}{\WToBang}
\renewcommand{\ValScript}{\ell}
\renewcommand{\ValueSet}{\BangSet_\oc}
\renewcommand{\VarSet}{\mathcal{V}\!ar}
\newcommand{\valu}{box}
\newcommand{\Ground}{Ground}
\newcommand{\ground}{ground}
\renewcommand{\CBN}[1]{{#1^\mathsf{N}}}
\renewcommand{\CBV}[1]{{#1^\mathsf{V}}}
\renewcommand{\Cbn}[1]{{#1^\mathsf{n}}}
\renewcommand{\Cbv}[1]{{#1^\mathsf{v}}}
\renewcommand{\Weak}[1]{#1_\mathsf{g}}
\renewcommand{\WCtx}{\mathtt{G}}
\newcommand{\emptymset}{[\,]}
\newcommand{\seq}\vec
\newcommand\retract\vartriangleleft
\newcommand\lam{\ensuremath{\lambda}}
\newcommand\Impl[2]{{#1}\Rightarrow{#2}}
\newcommand\Klc{\circ}
\newcommand\Ev{\mathsf{Ev}}
\newcommand{\Vdashv}{\vdash_\textup{v}}
\newcommand{\Intv}[1]{\vert#1\vert^\textup{v}}
\newcommand{\Intn}[1]{\vert#1\vert^\textup{n}}
\newcommand{\app}[1]{\mathsf{app}_{\sf #1}}
\newcommand{\Lam}[1]{\mathsf{lam}_{\sf #1}}
\newcommand{\Emptymset}{[\,]}
\newcommand{\lincalc}{$\lambda_\mathsf{lin}$}
\let\Gamma\varGamma
\let\Delta\varDelta
\let\Theta\varTheta
\let\Lambda\varLambda
\let\Xi\varXi
\let\Pi\varPi
\let\Sigma\varSigma
\let\Upsilon\varUpsilon
\let\Phi\varPhi
\let\Psi\varPsi
\let\Omega\varOmega
\renewcommand{\NoteProof}[1]{}
\newtheorem{definition}{Definition} 
\newtheorem{lemma}[definition]{Lemma} 
\newtheorem{proposition}[definition]{Proposition} 
\newtheorem{theorem}[definition]{Theorem} 
\newtheorem{corollary}[definition]{Corollary} 
\theoremstyle{remark}
\newtheorem{remark}[definition]{Remark} 
\newtheorem{example}[definition]{Example} 
\title{The Bang Calculus and the Two Girard's Translations}
\author{Giulio Guerrieri
\institute{Dipartimento di Informatica -- Scienza e Ingegneria (DISI), Universit\`a di Bologna, Bologna, Italy}
\email{\href{mailto:giulio.guerrieri@unibo.it}{giulio.guerrieri@unibo.it}}
\and
Giulio Manzonetto
\institute{LIPN, UMR 7030, Universit\'e Paris 13, Sorbonne Paris Cit\'e, F-93430, Villetaneuse, France}
\email{\href{mailto:giulio.manzonetto@lipn.univ-paris13.fr}{giulio.manzonetto@lipn.univ-paris13.fr}}
}
\begin{document}
\maketitle

\begin{abstract}
We study the two Girard's translations of intuitionistic implication into linear logic by exploiting the bang calculus, a paradigmatic functional language with an explicit box-operator that 
allows both call-by-name and call-by-value $\lam$-calculi to be encoded in.
We investigate how the bang calculus subsumes both call-by-name and call-by-value \lam-calculi from a syntactic and a semantic viewpoint.
\end{abstract}

\section{Introduction}
\label{sect:intro}

The $\lambda$-calculus is a simple 
framework formalizing many features of functional programming languages. 
For instance, 
the $\lambda$-calculus can be endowed with two distinct 
evaluation mechanisms (among others), \emph{call-by-name} (\CbN) and \emph{call-by-value} (\CbV), 
having quite 
different properties. 
A \CbN\ discipline re-evaluates an argument each time it is used. 
By contrast, a \CbV\ discipline first evaluates an argument 
once and for all, 
then recalls its value whenever required.
\CbN\ and \CbV\ $\lambda$-calculi are usually defined by means of 
operational rules giving rise to two different rewriting systems on the same set of $\lambda$-terms: in \CbN\ there is no restriction on firing a $\beta$-redex, whereas in \CbV\ a $\beta$-redex can be fired only when the argument is a value, 
\Ie a variable or an abstraction.
The standard categorical setting for describing denotational models 
of the $\lambda$-calculus, cartesian closed categories, provides models which 
are adequate for \CbN, but typically not for \CbV. 
For \CbV, the introduction of an additional computational monad (in the sense of Moggi \cite{Moggi89,Moggi91}) is necessary. 
While \CbN\ $\lambda$-calculus \cite{Barendregt84} has a rich and 
refined semantic and syntactic theory featuring advanced concepts such as 
separability, solvability, B\"ohm trees, classification of $\lambda$-theories, 
full-abstraction,  \textit{etc.}, this is not the case for \CbV\ 
$\lambda$-calculus \cite{Plotkin75}, in the sense that concerning the \CbV\ 
counterpart of these theoretical notions 
there are only \mbox{partial and not satisfactory results (or they do not exist at 
all!).} 

Quoting from \cite{Levy99}, ``the existence of two separate paradigms is troubling'' for at least two reasons:
\begin{itemize}
  \item it makes each language appear arbitrary (whereas a unified language might be more canonical);
  \item 
  each time we create a new style of semantics, \Eg Scott semantics, operational semantics, game semantics, continuation semantics, \Etc, we always need to do it twice\,---\,once for each paradigm.
\end{itemize}

Girard's Linear Logic ($\LL$, \cite{Girard87}) 
provides a unifying setting where this discrepancy could be solved since both \CbN\ and \CbV\ $\lambda$-calculi can be faithfully translated, via two different translations, into $\LL$ proof-nets.
Following \cite{Levy99}, we can claim that, via these translations, $\LL$ proof-nets ``subsume'' the \CbN\ and \CbV\ paradigms, in the sense that both operational and denotational semantics for those paradigms can be seen as arising, via these translations, from similar semantics for $\LL$.

Indeed, $\LL$ can be understood as a refinement of intuitionistic logic (and hence
$\lambda$-calculus) in which resource management is made explicit thanks to the 
introduction of a new pair of dual connectives: the exponentials ``$\oc$'' and ``$\wn$''. 
In proof-nets, the standard syntax for $\LL$ proofs, 
\emph{boxes} (introducing the modality ``$\oc$'') mark 
the sub-proofs available at will: during cut-elimination, such boxes can be 
erased (by weakening rules), can be duplicated (by contraction rules), can be opened (by dereliction rules) or can enter other 
boxes.
The categorical counterpart of this refinement is well known: it is the notion
of a cartesian $*$-autonomous\footnote{Actually the full symmetry of such a
category is not really essential as far as the $\lambda$-calculus is concerned,
it is however quite natural from the $\LL$ viewpoint: $\LL$ restores the
classical involutivity of negation in a constructive setting.} category,
equipped with a comonad endowed with a strong monoidal structure. 
Every instance of such a kind of structure yields a denotational model of $\LL$.

In his seminal article~\cite[p.~78]{Girard87}, Girard proposes a standard translation
of intuitionistic logic (and hence simply typed $\lambda$-calculus) in 
multiplicative-exponential $\LL$ proof-nets whose semantic counterpart is well
known: the Kleisli category of the exponential comonad ``$\oc$'' is cartesian 
closed thanks to the strong monoidal structure of ``$\oc$''. 
This translation $\CBN{(\cdot)}$ maps the intuitionistic implication 
$A\Rightarrow B$ to the $\LL$ formula $\Limpl{\Excl{\CBN{A}}}{\CBN{B}}$.
In \cite[p.~81]{Girard87} Girard proposes also another translation $\CBV{(\cdot)}$ that he calls ``boring'': 
it maps the the intuitionistic implication $A \Rightarrow B$ to the $\LL$ formula $\Excl{(\Limpl{\CBV A}{\CBV B})}$ (or equivalently $\Limpl{\Excl{\CBV A}}{\Excl{\CBV B}}$).
Since the untyped $\lambda$-calculus can be seen as simply typed with only one ground type $o$ satisfying the recursive identity $o = o \Rightarrow o$, the two Girard's translations $\CBN{(\cdot)}$ and $\CBV{(\cdot)}$ decompose this identity into $o = \oc o \multimap o$ and $o = \oc(o \multimap o)$ (or equivalently, $o = \oc o \multimap \oc o$), respectively.
At the $\lambda$-term level, these two translations differ only by the way they use logical 
exponential rules (\Ie box and dereliction), whereas 
they use multiplicative and structural (\Ie contraction and weakening) ingredients in the same way.
Because of this difference, the translation $\CBN{(\cdot)}$ encodes the \CbN\ $\lam$-calculus into $\LL$ proof-nets (in the sense that \CbN\ evaluation $\To{\beta}$ is simulated by cut-elimination via $\CBN{(\cdot)}$), while $\CBV{(\cdot)}$ encodes the \CbV\ $\lambda$-calculus into $\LL$ proof-nets (\CbV\ evaluation $\To{\Betav}$ is simulated by cut-elimination via $\CBV{(\cdot)}$).
Indeed, since in \CbN\ $\lambda$-calculus there is
no restriction on firing a $\beta$-redex 
(its argument can be freely copied or erased), the translation $\CBN{(\cdot)}$ 
puts the argument of every application into a box (see \cite{Danos90,Regnier92,Laurent03}); on the other hand, the translation 
$\CBV{(\cdot)}$ puts only values into boxes (see \cite{Accattoli15}) since in \CbV\ $\lambda$-calculus values are the only duplicable and discardable $\lambda$-terms.
Thus, as deeply studied in \cite{MaraistOderskyTurnerWadler99}, the two Girard's
logical translations explain the two different evaluation mechanisms, bringing them into the \mbox{scope of the Curry-Howard isomorphism.}

The syntax of multiplicative-exponential $\LL$ proof-nets is extremely expressive and powerful, but it is too general and sophisticated for the computational purpose of representing purely functional programs. 
For instance, simulation of $\beta$-reduction on $\LL$ proof-nets passes through intermediate states/proof-nets that cannot be expressed as $\lambda$-terms, since $\LL$ proof-nets 
have many spurious cuts with axioms that have no counterpart on $\lam$-terms.
More generally, $\LL$ proof-nets are manipulated in their graphical form, and while this is a handy formalism for intuitions, it is far from practical for formal reasoning.

From the analysis of Girard's translations it seems worthwhile to extend the syntax of the $\lambda$-calculus 
to internalize the insights coming from $\LL$ \emph{in a $\lambda$-like syntax}.
The idea is to enrich the $\lambda$-calculus 
with explicit \emph{boxes} 
marking the ``values'' of the calculus, \Ie the terms that 
can be freely duplicated and discarded: such a \emph{linear} $\lambda$-calculus 
subsumes both \CbN\ and \CbV\ $\lambda$-calculi, via suitable translations. 
This, of course, has been done quite early in the history of $\LL$ by defining various linear $\lambda$-calculi, 
such as \cite{LincolnMitchell92,Abramsky93,BentonBiermanDePaivaHyland93,BentonWadler96,RonchiRoversi97,MaraistOderskyTurnerWadler99, DBLP:conf/rta/Simpson05}.
All these calculi require a clear distinction between linear and non-linear variables, 
structural rules being freely (and implicitly) available for the latter and forbidden for the former. 
This distinction complicates the 
formalism and is actually useless as far as we \mbox{are interested in subsuming $\lambda$-calculi.}

Inspired by Ehrhard \cite{Ehrhard16}, in \cite{EhrhardG16} it has been introduced an intermediate formalism enjoying at the
same time the conceptual simplicity of $\lambda$-calculus (without any distinction between linear and non-linear variables) and the operational
expressiveness of $\LL$ proof-nets: the \emph{bang calculus}. 
It is a variant of the $\lambda$-calculus which is ``linear'' in the sense that the
exponential rules of $\LL$ (box and dereliction) are part of the syntax, so as to subsume \CbN\ and \CbV\ $\lambda$-calculi via two translations $\Cbn{(\cdot)}$ and $\Cbv{(\cdot)}$, respectively, from the set $\Lambda$ of $\lambda$-terms to the set $\BangSet$ of terms of the bang calculus (see \S\ref{sect:embedding}).
These two translations are deeply related to Girard's encodings $\CBN{(\cdot)}$ and $\CBV{(\cdot)}$ of \CbN\ and \CbV\ $\lambda$-calculi into $\LL$ proof-nets. 
Indeed, Girard's translations $\CBN{(\cdot)}$ and $\CBV{(\cdot)}$ decompose in such a way 
that the following diagrams commute: 

{\small
\begin{equation*}
  \xymatrix@=10pt{
    \Lambda \ar[rrrrrr]|-{\ \CBN{(\cdot)}\ } \ar[drrr]|-{\ \Cbn{(\cdot)}\ } & & & & & & \LL \\
    & & & \BangSet \ar[urrr]|-{\ (\cdot)^\circ\ }
  }
  \qquad\qquad
  \xymatrix@=10pt{
    \Lambda \ar[rrrrrr]|-{\ \CBV{(\cdot)}\ } \ar[drrr]|-{\ \Cbv{(\cdot)}\ } & & & & & & \LL \\
    & & & \BangSet \ar[urrr]|-{\ (\cdot)^\circ\ }
  }
\end{equation*}
}%

\noindent where $(\cdot)^\circ$ is a natural translation of the bang calculus into multiplicative-exponential $\LL$ proof-nets.
Thus, the bang calculus \emph{internalizes} the two Girard's translations in a $\lambda$-like calculus instead of $\LL$ proof-nets.
It subsumes both \CbN\ and \CbV\ $\lambda$-calculi in the \emph{same} rewriting system and denotational model, so that it may be a general setting to compare \CbN\ and \CbV.
The bang calculus can be seen as a metalanguage where the choice of \CbN\ or \CbV\ evaluation depends on the way the term is built up.
If we consider the syntax of the $\lambda$-calculus as a programming language, issues like \CbN\ versus \CbV\ evaluations affect the way the $\lambda$-calculus is translated in this metalanguage, but does not affect the metalanguage itself.


It turns out that this bang calculus was already known in the literature: it is 
an untyped version of the implicative fragment of Paul Levy's Call-By-Push-Value calculus \cite{Levy99,Levy06}.
Interestingly, his work was not motivated by an investigation of the two Girard's translations. 
This link is not casual, since it holds even when the bang calculus is extended to a PCF-like system, as shown by Ehrhard \cite{Ehrhard16}.

The aim of our paper is to further investigate the way the bang calculus subsumes \CbN\ and \CbV\ $\lambda$-calculi, refining and extending some results already obtained in \cite{EhrhardG16}.

\begin{enumerate}
  \item From a syntactic viewpoint, we show in \S\ref{sect:embedding} that the bang calculus subsumes in the same rewriting system both \CbN\ and \CbV\ $\lambda$-calculi, in the sense that the translations $\Cbn{(\cdot)}$ and $\Cbv{(\cdot)}$ from the $\lambda$-calculus to the bang calculus are sound and complete with respect to $\beta$-reduction and $\beta_v$-reduction, respectively (in \cite{EhrhardG16} only soundness was proven, and in a less elegant way). 
  \mbox{In other words, the diagrams}
  
  \vspace{-\baselineskip}
  {\small
\begin{equation*}
  \xymatrix@=10pt{
    \Lambda \ni \LTm \ar[rrrr]|-{\ \beta \ } \ar[dd]^{\ \Cbn{(\cdot)}\ } & & & & \LTmTwo \in \Lambda \ar[dd]^{\ \Cbn{(\cdot)}\ } \\\\
    \BangSet \ni \Cbn{\LTm} \ar[rrrr]|-{\ \Tot \ }  & & & & \Cbn{\LTmTwo} \in \BangSet
  }
  \qquad\qquad
  \xymatrix@=10pt{
    \Lambda \ni \LTm \ar[rrrr]|-{\ \beta_v \ } \ar[dd]^{\ \Cbv{(\cdot)}\ } & & & & \LTmTwo \in \Lambda \ar[dd]^{\ \Cbv{(\cdot)}\ } \\\\
    \BangSet \ni \Cbv{\LTm} \ar[rrrr]|-{\ \Tot \ }  & & & & \Cbv{\LTmTwo} \in \BangSet
  }
\end{equation*}
}%

  \noindent commute 
  in the two ways: starting from the $\beta$-reduction step $\To{\beta}$ for the \CbN\ $\lambda$-calculus (on the left) or the $\beta_v$-reduction step $\To{\Betav}$ for the \CbV\ $\lambda$-calculus (on the right), and starting from the $\Tot$-reduction step $\ToTot$ of the bang calculus.%
  \footnote{Actually, for the \CbV\ $\lambda$-calculus the diagram is slightly more complex, as we will see in \S\ref{sect:embedding}, but the essence does not change.}
  
  \item From a semantic viewpoint, we show in \S\ref{sect:semantics} that  \emph{every} $\LL$-based model $\cU$ of the bang calculus (as categorically defined in \cite{EhrhardG16}) provides a model for both \CbN\ and \CbV\ $\lambda$-calculi (in \cite{EhrhardG16} this was done only for the special case of relational semantics). 
  Moreover, given a $\lambda$-term $\LTm$, we investigate the relation between its interpretations $\Intn{\LTm}$ in \CbN\ (resp.~$\Intv{\LTm}$ in \CbV) and the interpretation $\Psem{\cdot}$ of its translation $\Cbn{\LTm}$ (resp.~$\Cbv{\LTm}$) into the bang calculus. 
  We prove that the diagram below on the left (for \CbN) commutes, whereas we give a counterexample (in the relational semantics) to the commutation of the diagram below on the right (for \CbV).
  We conjecture that there still exists a relationship in \CbV\ between $\Intv{\LTm}$ and $\Psem{\Cbv{\LTm}}$, but it should be more sophisticated than in \CbN.
  
  \vspace{-\baselineskip}
  {\small
  \begin{equation*}
    \xymatrix@=10pt{
      \Lambda \ni \LTm \ar[rrrr]|-{\ \Intn{\cdot}\ } \ar[drr]|-{\,\Cbn{(\cdot)}} & & & & \Intn{\LTm} = \Psem{\Cbn{\LTm}} \in \cU \\
      & &  \Cbn{\LTm} \in \BangSet \ar[urr]|-{\ \Psem{\cdot}\ }
    }
    \qquad\quad
    \xymatrix@=10pt{
      \Lambda \ni \LTm \ar[rrrr]|-{\ \Intv{\cdot}\ } \ar[drr]|-{\,\Cbv{(\cdot)}} & & & & \Intv{\LTm} = \Psem{\Cbv{\LTm}} \in \cU \\
      & & \Cbv{\LTm} \in \BangSet \ar[urr]|-{\ \Psem{\cdot}\ }
    }
  \end{equation*}
  }%

\end{enumerate}

In order to achieve these results in a clearer and simpler way, we have slightly modified (see \S\ref{sect:calculus}) the syntax and operational semantics of the bang calculus with respect to its original formulation in \cite{EhrhardG16}. 

{\small
\paragraph{Preliminaries and notations.}
  Let $\To{\Rule}$ and $\To{\Rule'}$ be binary relations on a set $X$.
		The composition of $\To{\Rule}$ and $\To{\Rule'}$ is denoted by $\To{\Rule}\To{\Rule'}$ or $\To{\Rule} \!\cdot\! \To{\Rule'}$.
    The transpose of $\To{\Rule}$ is denoted by $\RevTo{\Rule}$.
    The reflexive-transitive (resp.~reflexive) closure of $\To{\Rule}$ is denoted by $\To{\Rule}^*$ (resp.~$\To{\Rule}^=$).
    The \emph{$\Rule$-equivalence} $\simeq_\Rule$ is the 
    reflexive-transitive and symmetric closure of $\To{\Rule}$.
		Let $\LTm \in X$: $\LTm$ is \emph{$\Rule$-normal} if there is no $\LTmTwo \in X$ such that $\LTm \to_\Rule \LTmTwo$; \, $\LTm$ is \emph{$\Rule$-normalizable} if there is a $\Rule$-normal $\LTmTwo \in X$ such that $\LTm \to_\Rule^* \LTmTwo$, and we then say that $\LTmTwo$ is a \emph{$\Rule$-normal form of $\LTm$}. 
    
    The relation $\To{\Rule}$ is \emph{confluent} if $\MRevTo{\Rule} \cdot\! \To{\Rule}^* \ \subseteq \ \To{\Rule}^* \!\cdot \MRevTo{\Rule}$; it is \emph{quasi-strongly confluent} if $\RevTo{\Rule} \cdot\! \To{\Rule} \ \subseteq \ (\To{\Rule} \!\cdot \RevTo{\Rule}) \,\cup =$.
    From confluence it follows that: $\LTm \simeq_{\Rule} \LTmTwo$ iff $\LTm \To{\Rule}^* \LTmThree \,\,{}_\Rule^*\!\!\!\leftarrow \LTmTwo$ for some $\LTmThree \in X$; 
    and every $\Rule$-normalizable $\LTm \in X$ has a \emph{unique} $\Rule$-normal form.
		Clearly, quasi-strong confluence implies confluence.
}

\section{Syntax and reduction rules of the bang calculus}
\label{sect:calculus}

The syntax and operational semantics of the \emph{bang calculus} are defined in \Reffig{bangcalculus}.

\begin{figure}[!t]
  \centering
  \scalebox{0.9}{\parbox{1.05\linewidth}{
  \begin{align*}
    \text{\emph{Terms}:}		&& \Tm, \TmTwo, \TmThree &\Coloneqq \, \Var  \,\mid\, \La{\Var}\Tm \,\mid\, \App{\Tm}{\TmTwo} \,\mid\, \Der{\Tm}	\,\mid \, \Val			&&(\textup{set: } \BangSet)  \\
    \text{\emph{Contexts}:}	&& \Ctx			&\Coloneqq \, \Hole{\cdot} \mid \La{\Var}{\Ctx} \mid \App{\Ctx}{\Tm} \mid \App{\Tm}{\Ctx} \mid \Der{\Ctx} \mid \Bang{\Ctx}	&&(\textup{set: }\CtxSet) \\
    \text{\emph{\Ground\ contexts}:}&&\WCtx  		&\Coloneqq \, \Hole{\cdot} \mid \La{\Var}{\WCtx} \mid \App{\WCtx}{\Tm} \mid \App{\Tm}{\WCtx}  \mid \Der{\WCtx} 			&&(\textup{set: }\WCtxSet)	     \\[-2\baselineskip]
  \end{align*}
  \begin{align*}
    \text{\emph{Root-steps}:}	&& \App{\La{\Var}{\Tm}}{\ValTwo} &\Root{\ValScript} \Tm\Sub{\TmTwo}{\Var}  \qquad \Derp{\Bang{\Tm}} \Root{\Derel} \Tm \qquad \Root{\Tot} \, \Defeq \ \Root{\ValScript} \cup \Root{\Derel} \\[.2\baselineskip]
    \Rule\text{-\emph{reduction}:} &&   \Tm \To{\Rule}  \TmTwo \ &\Leftrightarrow \ \exists \, \Ctx \in \CtxSet, \, \exists\, \Tm'\!, \TmTwo' \!\in \BangSet : \Tm = \Ctxp{\Tm'}, \, \TmTwo = \Ctxp{\TmTwo'}, \, \Tm' \!\Root{\Rule} \TmTwo' \\
    \Weak{\Rule}\text{-\emph{reduction}:} &&   \Tm \WTo{\Rule}  \TmTwo \ &\Leftrightarrow \ \exists \, \WCtx \in \WCtxSet, \,\exists\, \Tm'\!, \TmTwo' \!\in \BangSet : \Tm = \WCtxp{\Tm'}, \, \TmTwo = \WCtxp{\TmTwo'}, \, \Tm' \!\Root{\Rule} \TmTwo'
    \\[-1.5\baselineskip]
  \end{align*}
  }}
  \caption{The bang calculus: its syntax and its reduction rules, where $\Rule \in \{\ValScript, \Derel, \Tot\}$.}
  \label{fig-bangcalculus}
\end{figure}

Terms are built up from a countably infinite set $\VarSet$ of \emph{variables} (denoted by $\Var, \VarTwo, \VarThree, \dots$). 
Terms of the form $\Bang{\Tm}$ (resp.~$\lambda\Var \, {\Tm}$; $\App{\Tm}{\TmTwo}$; $\Der{\Tm}$) are called 
or \emph{boxes} (resp.~\emph{abstractions}; \emph{(linear) applications}; \emph{derelictions}). 
The set of boxes is denoted by $\ValueSet
$.
The set of free variables of a term $\Tm$, denoted by $\Fv{\Tm}$, is defined as expected, $\lambda$ being the only binding construct. 
All terms are considered up to $\alpha$-conversion. 
Given $\Tm, \TmTwo \in \BangSet$ and a variable $\Var$, $\Tm \Sub{\TmTwo}{\Var}$ denotes the term obtained by the \emph{capture-avoiding substitution} of $\TmTwo$ (and not $\ValTwo$) for each free occurrence of $\Var$ in $\Tm$: so, $\Bang{\Tm} \Sub{\TmTwo}{\Var} = \Bang{(\Tm \Sub{\TmTwo}{\Var})} \in \ValueSet$.

\emph{Contexts} $\Ctx$ and \emph{\ground\ contexts} $\WCtx$ (both with exactly one hole $\Hole{\cdot}$) are defined in \Reffig{bangcalculus}.
All ground contexts are contexts but the converse fails: $\Bang{\Hole{\cdot}}$ is a non-\ground\ context. 
We write $\Ctxp{\Tm}$ for the term obtained by the capture-allowing substitution of the term $\Tm$ for the hole $\Hole{\cdot}$ in the context $\Ctx$.

Reductions in the bang calculus are defined in \Reffig{bangcalculus} as follows: 
given a \emph{root-step} rule $\Root{\Rule} \, \subseteq \BangSet \times \BangSet$, 
we define the \emph{$\Rule$-reduction} $\To{\Rule}$ (resp.~\emph{$\Weak{\Rule}$-reduction} or \emph{\ground\ $\Rule$-reduction} $\WTo{\Rule}$) as the closure of $\Root{\Rule}$ under contexts (resp.~\ground\ contexts).
Note that $\WTo{\Rule} \, \subsetneq \, \To{\Rule}$ as $\WCtxSet \subsetneq \CtxSet$: 
the only difference between $\To{\Rule}$ and $\WTo{\Rule}$ is that the latter does not reduce under $\oc$ (but both reduce under $\lambda$).
The root-steps used in the bang calculus are $\Root{\ValScript}$ and $\Root{\Derel}$ and $\Root{\Tot} \, \Defeq \ \Root{\ValScript} \!\cup \Root{\Derel}$.
From the definitions in \Reffig{bangcalculus} it follows that $\ToTot \, = \, \ToVal \cup \ToBang$ and $\WToTot \, = \, \WToVal\! \cup \WToBang$.
In $\LL$ proof-nets, $\Tot$-reduction and $\Weak{\Tot}$-reduction correspond to cut-elimination and cut-elimination outside boxes, respectively.

Intuitively, the basic idea behind the root-steps $\Root{\ValScript}$ and $\Root{\Derel}$ is that the \valu-construct $\oc$ marks the only 
terms that can be erased and duplicated.
When the argument of a construct $\Der{}$ is a box $\Val$, the root-step $\Root{\Derel}$ opens the box, \Ie accesses its content $\Tm$, 
destroying its status of availability at will (but 
$\Tm$, in turn, might be a box).
The root-step $\Root{\ValScript}$ says that a $\beta$-like redex $\App{\La{\Var}{\Tm}}{\TmTwo}$ can be fired only when its argument is a \valu, \Ie$\TmTwo = \Bang{\TmThree}$: 
if it is so, the content $\TmThree$ of the box $\TmTwo$ replaces any free occurrence of $\Var$ in $\Tm$.\footnotemark
\footnotetext{In \cite{EhrhardG16}, the definition of 
	$\Root{\ValScript}$ is slightly different from \Reffig{bangcalculus}: 
	$\App{\La{\Var}{\Tm}}{V} \Root{\ValScript} \Tm\Sub{V\!}{\Var}$ where $V$ is a variable or a box.
	Logically, this means that a variable of the bang calculus corresponds in $\LL$ proof-nets to an exponential axiom 
	in \cite{EhrhardG16}, and to a derelicted axiom here.
	The two definitions $\Root{\ValScript}$ of are expressively equivalent (they can be simulated each other), but the one adopted here allows for more elegant embeddings of \CbN\ and \CbV\ $\lambda$-calculi into the bang calculus (\emph{cf.}~\Refthm{embedding} below with Prop.~2 in \cite{EhrhardG16}).}

\begin{example}
\label{ex:delta}
  Let $\Delta \Defeq \La{\Var}{\App{\Var}{\Bang{\Var}}}$ and $\Delta\!' \Defeq \La{\Var}\App{\Derp{\Bang{\Var}}}{\Bang{\Var}}$. 
  Then, $\Delta\!' \WToBang \Delta$ and $\App{\Delta}{\Bang{\Delta\!}} \WToVal \!\App{\Delta}{\Bang{\Delta\!}} \WToVal\!\dots$ and $\App{\Derp{\Bang{\Delta\!'}}}{\Bang{\Delta\!'}} \!\WToBang\! \App{\Delta\!'}{\Bang{\Delta\!'}} \!\WToVal\! \App{\Derp{\Bang{\Delta\!'}}}{\Bang{\Delta\!'}} \WToBang\!\dots$ 
  Note that $\Bang{(\App{\Delta}{\Bang{\Delta\!}})}$ is $\Weak{\Tot}$-normal but not $\Tot$-normalizable.
\end{example}


%


The \emph{bang} (resp.~\emph{ground bang}) \emph{calculus} is the set $\BangSet$ endowed with the reduction $\ToTot$ (resp.~$\WToTot$).

\paragraph{\texorpdfstring{Quasi-strong confluence of $\Weak{\Tot}$-reduction and confluence of $\Tot$-reduction.}{Quasi-strong confluence of bw-reduction and confluence of b-reduction}}

To prove the confluence of $\ToTot$ (\Refpropp{confluence}{ToTot}), first we show that $\ToVal$ is confluent (\Reflemmap{basic-ToBang-ToVal}{ToVal-confluent}). 
The latter is proved by a standard adaptation of Tait--Martin-L\"of technique\,---\,as improved by Takahashi \cite{Takahashi95}\,---\,based on parallel reduction. 
For this purpose, we introduce \emph{parallel $\ValScript$-reduction}, denoted by $\ParallelVal$, a binary relation on $\BangSet$ defined by the rules in Fig.~\ref{fig:parallel}.
Intuitively, $\ParallelVal$ reduces simultaneously a number of $\ValScript$-redexes existing in a term.
It is immediate to check that $\ParallelVal$ is reflexive and $\ToVal \, \subseteq \, \ParallelVal \, \subseteq \, \ToVal^*$, hence $\ParallelVal^* \, = \, \ToVal^*$\,.

For any term $\Tm$, we denote by $\Tm^*$ the term obtained by reducing \emph{all} $\ValScript$-redexes in $\Tm$ simultaneously.
Formally, $\Tm^*$ is defined by induction on $\Tm \in \BangSet$ as follows:
\begin{equation*}
\begin{gathered}
  \Var^* \Defeq \Var \qquad\quad (\La{\Var}{\Tm})^* \Defeq \La{\Var}{\Tm^*} \qquad\quad (\Bang{\Tm})^* \Defeq \Bang{(\Tm^*)} \qquad\quad (\Der{\Tm})^* \Defeq \Derp{\Tm^*}	\\
  (\App{\Tm}{\TmTwo})^* \Defeq \App{\Tm^*}{\TmTwo^*} \textup{ \ if } \Tm \neq \La{\Var}{\TmThree} \textup{ or } \TmTwo \notin \ValueSet \qquad\qquad\qquad\qquad
   (\App{\La{\Var}{\Tm}}{\ValTwo})^* \Defeq \Tm^*\Sub{{\TmTwo}^*}{\Var}\,.
\end{gathered}
\end{equation*}

\newcounter{lemma:development}
\addtocounter{lemma:development}{\value{definition}}
\begin{lemma}[Development]
  \label{lemma:development}
  Let 
  \NoteProof{lemmaAppendix:development}
  $\Tm, \TmTwo \in \BangSet$.
  If $\Tm \ParallelVal \TmTwo$, then $\TmTwo \ParallelVal \Tm^*$.
\end{lemma}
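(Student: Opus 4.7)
The plan is to proceed by induction on the structure of $\Tm$ (equivalently, on the derivation of $\Tm \ParallelVal \TmTwo$), following the Tait--Martin-L\"of/Takahashi pattern: for every syntactic form of $\Tm$, enumerate the possible shapes of $\TmTwo$ allowed by the parallel reduction rules, apply the induction hypothesis to the immediate subterms, and then re-assemble a parallel reduction to $\Tm^*$.

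Before doing this, I would establish the standard auxiliary substitution property for $\ParallelVal$: if $\Tm \ParallelVal \Tm'$ and $\TmTwo \ParallelVal \TmTwo'$, then $\Tm\Sub{\TmTwo}{\Var} \ParallelVal \Tm'\Sub{\TmTwo'}{\Var}$. This is proved by induction on the derivation $\Tm \ParallelVal \Tm'$, using the fact that $\ParallelVal$ is closed under all syntactic constructors and absorbs a root $\ValScript$-step; the substitution does not create or destroy abstractions applied to boxes in problematic ways because $\ValueSet$ is closed under substitution ($\Bang{\Tm}\Sub{\TmTwo}{\Var} = \Bang{(\Tm\Sub{\TmTwo}{\Var})}$), as noted in the paper.

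The routine inductive cases are when $\Tm$ is a variable, an abstraction, a box, a dereliction, or an application $\App{\Tm_1}{\Tm_2}$ with $\Tm_1$ not of the form $\La{\Var}{\TmThree}$ or $\Tm_2 \notin \ValueSet$: in each such case the parallel reduction $\Tm \ParallelVal \TmTwo$ must follow the structural rule matching the head constructor of $\Tm$, the induction hypothesis converts the premises into parallel reductions to the corresponding starred subterms, and we repackage them with the same structural rule. The only case requiring care is the critical redex $\Tm = \App{\La{\Var}{\TmThree}}{\Bang{\TmFour}}$, where $\Tm^* = \TmThree^*\Sub{\TmFour^*}{\Var}$: here $\Tm \ParallelVal \TmTwo$ can arise in two ways, either structurally, giving $\TmTwo = \App{\La{\Var}{\TmThree'}}{\Bang{\TmFour'}}$ with $\TmThree \ParallelVal \TmThree'$ and $\TmFour \ParallelVal \TmFour'$, or via the root $\ValScript$-step, giving $\TmTwo = \TmThree'\Sub{\TmFour'}{\Var}$ with the same premises. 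In the first subcase, apply the induction hypothesis to get $\TmThree' \ParallelVal \TmThree^*$ and $\TmFour' \ParallelVal \TmFour^*$, then fire the root step in parallel: $\TmTwo = \App{\La{\Var}{\TmThree'}}{\Bang{\TmFour'}} \ParallelVal \TmThree^*\Sub{\TmFour^*}{\Var} = \Tm^*$. In the second subcase, the same induction hypothesis combined with the substitution property yields $\TmTwo = \TmThree'\Sub{\TmFour'}{\Var} \ParallelVal \TmThree^*\Sub{\TmFour^*}{\Var} = \Tm^*$ directly.

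The main obstacle is the substitution lemma and, within the proof proper, keeping track in the critical case that both the ``fired'' and ``not yet fired'' parallel reducts converge to the same complete development $\Tm^*$; this is exactly where the symmetric design of $\ParallelVal$ (closed under both the structural rule and the simultaneous redex-firing rule for $\App{\La{\Var}{\TmThree}}{\Bang{\TmFour}}$) pays off. All other cases are a direct application of the induction hypothesis.
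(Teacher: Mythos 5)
Your proposal is correct and follows essentially the same route as the paper: the standard Tait--Martin-L\"of/Takahashi argument by induction on $\Tm$ (with inversion on the derivation of $\Tm \ParallelVal \TmTwo$), supported by the auxiliary substitution lemma for $\ParallelVal$, and with the critical case $\App{\La{\Var}{\TmThree}}{\Bang{\TmFour}}$ split according to whether the redex was fired structurally or by the root rule. Both subcases of the critical case are handled correctly, so there is nothing to add.
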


\begin{figure}[!bt]
  \centering
  \scalebox{0.85}{
    \begin{prooftree}[label separation=0.3em]
      \infer0[\footnotesize$\mathsf{var}$]{\Var \ParallelVal \Var}
    \end{prooftree}
    \quad
    \begin{prooftree}[label separation=0.3em]
      \hypo{\Tm \ParallelVal \TmTwo}
      \infer1[\footnotesize$\mathsf{\lambda}$]{\lambda\Var\,\Tm \ParallelVal \lambda\Var\,\TmTwo}
    \end{prooftree}
    \quad
    \begin{prooftree}[label separation=0.3em]
      \hypo{\Tm \ParallelVal \TmTwo}
      \infer1[\footnotesize$\oc$]{\Bang{\Tm} \ParallelVal \Bang{\TmTwo}}
    \end{prooftree}
    \quad
    \begin{prooftree}[label separation=0.3em]
      \hypo{\Tm \ParallelVal \TmTwo}
      \infer1[\footnotesize$\Der{\!}$]{\Der{\Tm} \ParallelVal \Der{\TmTwo}}
    \end{prooftree}
    \quad
    \begin{prooftree}[separation=1.2em, label separation=0.3em]
      \hypo{\Tm \ParallelVal \TmTwo}
      \hypo{\TmThree \ParallelVal \TmFour}
      \infer2[\footnotesize$@$]{\App{\Tm}{\TmThree} \ParallelVal \App{\TmTwo}{\TmFour}}
    \end{prooftree}
    \quad
    \begin{prooftree}[separation=1.3em, label separation=0.3em]
      \hypo{\Tm \ParallelVal \TmTwo}
      \hypo{\TmThree \ParallelVal \TmFour}
      \infer2[\footnotesize$\ValScript$]{\App{\La{\Var}{\Tm}}{\ValThree} \ParallelVal \TmTwo\Sub{\TmFour}{\Var}}
    \end{prooftree}
  }
  \caption{Parallel $\ValScript$-reduction.}
  \label{fig:parallel}
\end{figure}

\Reflemma{development} is the key ingredient to prove the confluence of $\ToVal$ (\Reflemmap{basic-ToBang-ToVal}{ToVal-confluent} below).

The next lemma lists a series of good rewriting properties of $\ValScript$-, $\Weak{\ValScript}$-, $\Derel$- and $\Weak{\Derel}$-reductions that will be used to prove quasi-strong confluence of $\WToTot$ and confluence of $\ToTot$ (\Refprop{confluence}  below).

\newcounter{lemma:basic-ToBang-ToVal}
\addtocounter{lemma:basic-ToBang-ToVal}{\value{definition}}
\begin{lemma}[Basic properties of reductions]
\label{lemma:basic-ToBang-ToVal}\hfill
\NoteProof{lemmaAppendix:basic-ToBang-ToVal}
  \begin{enumerate}
    \item\label{lemma:basic-ToBang-ToVal.WToVal-strongly-confluent} $\WToVal$\! is quasi-strongly confluent, \Ie $\RevTo{\Weak{\ValScript}\!} \cdot \!\WToVal \, \subseteq (\WToVal \!\!\cdot \RevTo{\Weak{\ValScript}\!}) \, \cup =$\,.
    \item\label{lemma:basic-ToBang-ToVal.ToBang-strongly-confluent} $\WToBang$ and $\ToBang$ are quasi-strongly confluent (separately).
%
%
    \item \label{lemma:basic-ToBang-ToVal.commute} 
		$\WToBang$ and $\WToVal$ strongly commute (\Ie $\RevTo{\Weak{\Derel}} \cdot\! \WToVal \, \subseteq \, \WToVal \!\!\cdot \RevTo{\Weak{\Derel}} $); \,
		$\ToDer$ quasi-strongly commutes over $\ToVal$ (\Ie $\RevTo{\Derel} \,\cdot \ToVal \ \subseteq \ \ToVal \!\cdot \MRevTo{\Derel}$); \,
		$\ToBang$ and $\ToVal$ commute (\Ie $\MRevTo{\Derel} \cdot\! \ToVal^* \ \subseteq \ \ToVal^* \!\cdot \MRevTo{\Derel} $).

    \item\label{lemma:basic-ToBang-ToVal.ToVal-confluent} $\ToVal$ is confluent, \Ie $\MRevTo{\ValScript} \cdot\! \ToVal^* \ \subseteq \ \ToVal^* \!\cdot \MRevTo{\ValScript}$\,.
  \end{enumerate}
\end{lemma}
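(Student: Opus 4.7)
The plan is to prove each item by a diagrammatic local analysis of the redex peaks in the source term, reserving the Development lemma for item (4). The guiding observation is that ground reductions $\WToBang$ and $\WToVal$ never fire a redex inside a $\oc$-box, which rules out the only configuration in which the root-step $\App{\La{\Var}{\Tm}}{\Bang{\TmTwo}} \Root{\ValScript} \Tm\Sub{\TmTwo}{\Var}$ can duplicate or erase another redex via its substitution; this cleaner behaviour is precisely what makes items (1), (2) and the first half of (3) stronger than their unrestricted counterparts.

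For (1), I would classify a peak of two $\WToVal$-steps by the relative position of their redexes: identical (then $\Tm=\TmThree$), disjoint (they commute in one step each), or nested. The only nested case compatible with ground reduction has the inner $\ValScript$-redex strictly inside the body $\TmB$ of an outer redex $\App{\La{\Var}{\TmB}}{\Bang{\TmC}}$; since the inner redex sits at a single position described by some ground context $D$ in $\TmB$, the substitution $\Sub{\TmC}{\Var}$ only rewrites the subterms embedded in $D$, leaving a single residual redex at the corresponding position of $D\Sub{\TmC}{\Var}$. A direct computation using the substitution lemma closes this case with one step from each side. Item (2) follows even more easily since the step $\Derp{\Bang{\Tm}} \Root{\Derel} \Tm$ never duplicates, so nested $\Derel$-peaks close in one step each regardless of whether we are ground or not. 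The first half of (3) is the same style of case split for a mixed $\Derel$/$\ValScript$ peak: the two dangerous configurations (a $\Derel$-redex inside $\Bang{\TmC}$, or a $\ValScript$-redex inside the $\Derel$'s box) are forbidden by the ground restriction, and the remaining cases close with one step each.

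The main technical point is the second part of (3): in the unrestricted setting a $\Derel$-redex can lie inside the box argument $\Bang{\TmC}$ of a $\ValScript$-redex, so firing $\ValScript$ first substitutes a copy of $\TmC$ for each free occurrence of $\Var$ in the body, duplicating the $\Derel$-redex $k$ times (and erasing it when $k=0$). This is exactly the case in which the closing reduction on the $\TmThree$ side consists of $k$ independent $\Derel$-steps, one per residual, which is captured by $\MRevTo{\Derel}$; all other sub-cases close with a single $\Derel$-step. The third part of (3) is then obtained from the second by two nested inductions on the lengths of the $\ToBang^*$ and $\ToVal^*$ reductions, tiling the rectangle with the local diagram provided by the second part.

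Finally, (4) is a straightforward application of the Tait--Martin-L\"of technique already set up in the paper: from the Development \Reflemma{development}, for any peak $\TmTwo \RevParallel{\ValScript} \Tm \ParallelVal \TmThree$ we obtain $\TmTwo \ParallelVal \Tm^*$ and $\TmThree \ParallelVal \Tm^*$, which is the diamond property for $\ParallelVal$; taking reflexive-transitive closures yields confluence of $\ParallelVal$, and since $\ParallelVal^* \, = \, \ToVal^*$ this is confluence of $\ToVal$. The principal obstacle is the duplication bookkeeping in the second part of (3), which requires carefully tracking the residuals of the inner $\Derel$-redex across the $\ValScript$ substitution; the other items are either direct position case splits or standard closure arguments.
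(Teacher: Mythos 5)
Your proposal is correct and follows essentially the same route as the paper: items (1)--(3) by local analysis of redex peaks (with the ground restriction eliminating duplication, and the $\Derel$-redex inside the box argument of a $\ValScript$-redex being the one genuinely duplicating case, closed by $\MRevTo{\Derel}$), and item (4) via the diamond property of $\ParallelVal$ obtained from the Development Lemma in Takahashi's style. The only point worth polishing is in item (2): for non-ground $\ToBang$ the nested configuration $\Derp{\Bang{\Derp{\Bang{\TmTwo}}}}$ makes the two one-step reducts coincide, which is exactly why the statement is \emph{quasi}-strong confluence (the ``$\cup =$'' branch) rather than closing in one step each.
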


\newcounter{prop:confluence}
\addtocounter{prop:confluence}{\value{definition}}
\begin{proposition}[Quasi-strong confluence of $\WToTot$ and confluence of $\ToTot$]\hfill
\label{prop:confluence}
\NoteProof{propappendix:confluence}
  \begin{enumerate}
    \item\label{prop:confluence.WToTot} The reduction $\WToTot$ is quasi-strongly confluent, \Ie $\RevTo{\Weak{\Tot}\!} \cdot \!\WToTot \, \subseteq (\WToTot \!\!\cdot \RevTo{\Weak{\Tot}\!}) \, \cup =$\,.
    \item\label{prop:confluence.ToTot} The reduction $\ToTot$ is confluent, \Ie $\MRevTo{\Tot} \cdot\! \ToTot^* \ \subseteq \ \ToTot^* \!\cdot \MRevTo{\Tot}$\,.
  \end{enumerate}
\end{proposition}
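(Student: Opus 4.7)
\medskip

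\noindent\textbf{Proof plan.}

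\emph{Part 1 (quasi-strong confluence of $\WToTot$).} The plan is a straightforward case analysis on the two diverging single steps. Suppose $\Tm_1 \RevTo{\Weak{\Tot}\!} \Tm \WToTot \Tm_2$. Since $\WToTot \,=\, \WToVal \cup \WToBang$, one of four situations occurs: both steps are in $\WToVal$; both steps are in $\WToBang$; or one step is in $\WToVal$ and the other is in $\WToBang$ (two symmetric subcases). In the first case I invoke \Reflemmap{basic-ToBang-ToVal}{WToVal-strongly-confluent}, in the second I invoke the $\Weak{\Derel}$-part of \Reflemmap{basic-ToBang-ToVal}{ToBang-strongly-confluent}, and in the mixed cases I invoke the strong commutation of $\WToBang$ and $\WToVal$ (\Reflemmap{basic-ToBang-ToVal}{commute}). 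Each lemma closes the peak in at most one step on each side, or returns equality; since $\WToVal, \WToBang \subseteq \WToTot$, the joining term is witnessed by $\WToTot \cdot \RevTo{\Weak{\Tot}\!}$ (or $\Tm_1 = \Tm_2$), which is exactly the required inclusion.

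\emph{Part 2 (confluence of $\ToTot$).} The plan is to apply the Hindley--Rosen lemma: if two binary relations are each confluent and they commute, then their union is confluent. Instantiating with $\ToVal$ and $\ToBang$, I need (i) confluence of $\ToVal$, (ii) confluence of $\ToBang$, and (iii) commutation of $\ToVal$ and $\ToBang$. Item (i) is precisely \Reflemmap{basic-ToBang-ToVal}{ToVal-confluent}. Item (ii) follows from the quasi-strong confluence of $\ToBang$ given in \Reflemmap{basic-ToBang-ToVal}{ToBang-strongly-confluent}, using the standard implication ``quasi-strong confluence $\Rightarrow$ confluence'' already recalled in the preliminaries. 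Item (iii) is the last clause of \Reflemmap{basic-ToBang-ToVal}{commute}, namely $\MRevTo{\Derel} \cdot\! \ToVal^* \subseteq \ToVal^* \!\cdot \MRevTo{\Derel}$. Hence $\ToTot \,=\, \ToVal \cup \ToBang$ is confluent.

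\emph{Where the work lies.} All the substantive rewriting content has been absorbed into \Reflemma{basic-ToBang-ToVal}, so this proposition is essentially a packaging step: part 1 is a four-case diagram chase, and part 2 is a single application of Hindley--Rosen. The only point requiring mild care is keeping track of the distinction between quasi-strong confluence (one-step joinability, needed for part 1) and ordinary confluence (needed for part 2): in part 1 I must avoid ``collapsing'' a mixed peak via the many-step commutation of \Reflemmap{basic-ToBang-ToVal}{commute} and instead use the one-step strong commutation of $\WToBang$ and $\WToVal$, so that the joining stays inside $(\WToTot \!\cdot \RevTo{\Weak{\Tot}\!}) \cup =$ rather than landing in its reflexive-transitive closure.
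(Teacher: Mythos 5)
Your proposal is correct and follows essentially the same route as the paper: part 1 by a four-case peak analysis using \Reflemmaps{basic-ToBang-ToVal}{WToVal-strongly-confluent}{commute}, and part 2 by Hindley--Rosen applied to $\ToVal$ and $\ToBang$ using confluence of each (the latter via quasi-strong confluence) together with their commutation. The items of \Reflemma{basic-ToBang-ToVal} are tailored precisely for this packaging, and you use each of them where intended.
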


\section{\texorpdfstring{The bang calculus with respect to \CbN\ and \CbV\ $\lambda$-calculi, syntactically}{The bang calculus with respect to \CbN\ and \CbV\ lambda-calculi, syntactically}}
\label{sect:embedding}



One of the interests of the bang calculus is that it is a general framework 
where both \emph{call-by-name} (\CbN, 
 \cite{Barendregt84}) and Plotkin's \emph{call-by-value} (\CbV, \cite{Plotkin75}) $\lambda$-calculi can be embedded.\footnotemark
\footnotetext{Here, with \CbN\ or \CbV\ $\lambda$-calculus we refer to the whole calculus and its general reduction rules, not only to \CbN\ or \CbV\ (deterministic) evaluation strategy in the $\lambda$-calculus.}
Syntax and reduction rules of \CbN\ and \CbV\ $\lambda$-calculi are in \Reffig{lambdacalculus}: \emph{$\beta$-reduction} $\ToBeta$ (resp.~\emph{$\Betav$-reduction} $\ToBetav$) is the reduction for the \CbN\ (resp.~\CbV) $\lambda$-calculus.
\CbN\ and \CbV\ $\lambda$-calculi share the same term syntax (the set $\Lambda$ of $\lambda$-terms of \CbN\ and \CbV\ $\lambda$-calculi can be seen as a subset of $\BangSet$), whereas $\ToBetav$ is just the restriction of $\ToBeta$ allowing to fire a $\beta$-redex $(\La{\Var\,}{\LTm})\LTmTwo$ only when $\LTmTwo$ is a $\lambda$-value, \Ie a variable or an abstraction.
\emph{\Ground\ $\beta$-(resp.~$\Betav$-)reduction} $\WToBeta$ (resp.~$\WToBetav$) is an interesting restriction of $\beta$-(resp.~$\Betav$-)reduction:
\begin{itemize}
  \item $\WToBeta$ is the ``hereditary'' head $\beta$-reduction, which contains head $\beta$-reduction and weak head $\beta$-reduction, two well-known evaluation strategies for \CbN\ $\lambda$-calculus (both 
  reduce the $\beta$-redex in head position, the latter does not reduce under $\lambda$'s);
  \item $\WToBetav$ is the weak $\Betav$-reduction, \Ie $\Betav$-reduction with the restriction of not reducing under $\lambda$'s; 
  it contains (weak) head $\Betav\!$-reduction (
  aka left reduction in \cite[p.\,136]{Plotkin75}), the well-known evaluation strategy for \CbV\ $\lambda$-calculus 
  firing 
  the $\Betav$-redex in left position
  (if any) not under 
  $\lambda$'s.
\end{itemize}

\begin{figure}[!t]
  \centering
  \scalebox{0.9}{\parbox{1.067\linewidth}{
  \begin{align*}
    \lambda\text{-\emph{terms}:}		&& \LTm, \LTmTwo, \LTmThree &\Coloneqq \, \LVal  \,\mid\,  \LTm\LTmTwo			&&(\textup{set: } \Lambda)  \\
    \lambda\text{-\emph{values}:}		&& \LVal &\Coloneqq \, \Var  \,\mid\, \La{\Var\,}\LTm 					&&(\textup{set: } \LambdaVal)  \\
    \lambda\text{-\emph{contexts}:}	&& \LCtx		&\Coloneqq \, \Hole{\cdot} \mid \La{\Var}{\LCtx} \mid \LCtx\LTm \mid \LTm\LCtx 	&&(\textup{set: }\LambdaCtx) \\
    \text{\emph{\CbN\ \ground\ $\lambda$-contexts}:}&&\NCtx	&\Coloneqq \, \Hole{\cdot} \mid \La{\Var}{\NCtx} \mid \NCtx\LTm 	&&(\textup{set: }\Lambda_\NCtx)	     \\
    \text{\emph{\CbV\ \ground\ $\lambda$-contexts}:}&&\VCtx	&\Coloneqq \, \Hole{\cdot} \mid \VCtx\LTm \mid \LTm\VCtx  		&&(\textup{set: }\Lambda_\VCtx)	     \\[-2\baselineskip]
  \end{align*}
  \begin{align*}
    \text{\emph{Root-steps}:}	&& (\La{\Var}{\LTm}){\TmTwo} &\Root{\beta} \LTm\Sub{\TmTwo}{\Var} \ (\text{\CbN})  \qquad \ (\La{\Var}{\LTm}){\LVal} \Root{\Betav} \LTm\Sub{\LVal}{\Var} \ (\text{\CbV}) \\[.2\baselineskip]
    \Rule\text{-\emph{reduction}:} &&   \LTm \To{\Rule}  \LTmTwo \ &\Leftrightarrow \ \exists \, \Ctx \in \Lambda_\Ctx, \, \exists\, \LTm'\!, \LTmTwo' \!\in \Lambda : \LTm = \Ctxp{\LTm'}, \, \LTmTwo = \Ctxp{\LTmTwo'}, \, \LTm' \!\Root{\Rule} \LTmTwo' \\
    \Weak{\beta}\text{-\emph{reduction}:} &&   \LTm \WToBeta  \LTmTwo \ &\Leftrightarrow \ \exists \, \NCtx \in \Lambda_\NCtx, \,\exists\, \LTm'\!, \LTmTwo' \!\in \Lambda : \LTm = \NCtxp{\LTm'}, \, \LTmTwo = \NCtxp{\LTmTwo'}, \, \LTm' \!\Root{\beta} \LTmTwo' \\
    \Weak{\Betav}\text{-\emph{reduction}:} &&   \LTm \WToBetav  \LTmTwo \ &\Leftrightarrow \ \exists \, \VCtx \in \Lambda_\VCtx, \,\exists\, \LTm'\!, \LTmTwo' \!\in \Lambda : \LTm = \VCtxp{\LTm'}, \, \LTmTwo = \VCtxp{\LTmTwo'}, \, \LTm' \!\Root{\Betav} \LTmTwo'
    \\[-1.5\baselineskip]
  \end{align*}
  }}
  \caption{The \CbN\ and \CbV\ $\lambda$-calculi: their syntax and reduction rules, where $\Rule \in \{\beta, \Betav\}$.}
  \label{fig-lambdacalculus}
\end{figure}

\paragraph{\CbN\ and \CbV\ translations into the bang calculus.}
The \emph{\CbN} and \emph{\CbV} \emph{translations} are two functions $\Cbn{(\cdot)} \colon \Lambda \to \BangSet$ and $\Cbv{(\cdot)} \colon \Lambda \to \BangSet$, respectively, translating $\lambda$-terms into terms of the bang calculus: 
\begin{align*}
  \Cbn{\Var} &\Defeq \Var 						& \ \Cbn{(\La{\Var\,}{\LTm})} &\Defeq \La{\Var\,}{\Cbn{\LTm}} & \ \Cbn{(\LTm\LTmTwo)} &\Defeq \App{\Cbn{\LTm}}{\Bang{{\Cbn{\LTmTwo}}}} \,; \\ 
  \Cbv{\Var} &\Defeq \Bang{\Var} & \	\Cbv{(\La{\Var\,}{\LTm})} &\Defeq \Bang{(\La{\Var\,}{\Cbv{\LTm}})} 
	& \ \Cbv{(\LTm\LTmTwo)} &\Defeq \App{\Der{\,\Cbv{\LTm}}}{\Cbv{\LTmTwo}} \,.
\end{align*}

\begin{example}
\label{ex:delta-translated}
  Let $\omega \Defeq (\lambda \Var \, \Var\Var)\lambda \Var \, \Var\Var$, the typical diverging $\lambda$-term for \CbN\ and \CbV\ $\lambda$-calculi: one has 
  $\Cbn{\omega} = \App{\Delta}{\Bang{\Delta}}$ and $\Cbv{\omega} = \App{\Derp{{\Bang{\Delta\!'}}}}{\Bang{\Delta\!'}} 
  $, 
  which are not $\Weak{\Tot}$- nor $\Tot$-normalizable 
  ($\Delta$ and $\Delta\!'$ are defined in \Refex{delta}). 
\end{example}

For any $\lambda$-term $\LTm$, $\Cbn{\LTm}$ and $\Cbv{\LTm}$ are just different decorations of $\LTm$ by means of the monadic operators $\oc$ and $\Der{\!}$ (the latter does not occur in $\Cbn{\LTm}$).
Note that the translation $\Cbn{(\cdot)}$ puts the argument of any application into a box: in \CbN\ $\lambda$-calculus any $\lambda$-term is duplicable or discardable.
On the other hand, only $\lambda$-values (\Ie abstractions and variables) are translated by $\Cbv{(\cdot)}$ into boxes, as they are the only $\lambda$-terms duplicable or discardable in \CbV\ $\lambda$-calculus.

\newcounter{lemma:substitution}
\addtocounter{lemma:substitution}{\value{definition}}
\begin{lemma}[Substitution]
  \label{lemma:substitution}
  Let 
  $\LTm, \LTmTwo$ be $\lambda$-terms and $\Var$ be a variable.
  \begin{enumerate}
    \item\label{lemma:substitution.cbn}\emph{\CbN\ translation vs.~substitution:} One has that $\Cbn{\LTm} \Sub{\Cbn{\LTmTwo}\!}{\Var} = \Cbn{(\LTm\Sub{\LTmTwo}{\Var})}$.
    \item\label{lemma:substitution.cbv}\emph{\CbV\ translation vs.~substitution:} If $\LTmTwo$ is such that $\Cbv{\LTmTwo} = \Bang{\TmTwo}$ for some $\TmTwo \in \BangSet$, then $\Cbv{\LTm} \Sub{\TmTwo}{\Var} = \Cbv{(\LTm\Sub{\LTmTwo}{\Var})}$.
  \end{enumerate}
\end{lemma}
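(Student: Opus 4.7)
The plan is to prove both items by a straightforward structural induction on the $\lambda$-term $\LTm$, unfolding the definitions of $\Cbn{(\cdot)}$ and $\Cbv{(\cdot)}$ in each case and closing with the induction hypothesis.

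For item~\ref{lemma:substitution.cbn} (\CbN), the induction has three cases. If $\LTm = \Var$ then $\Cbn{\Var}\Sub{\Cbn{\LTmTwo}}{\Var} = \Var\Sub{\Cbn{\LTmTwo}}{\Var} = \Cbn{\LTmTwo} = \Cbn{(\Var\Sub{\LTmTwo}{\Var})}$; for a variable $\varTwo \neq \Var$ both sides reduce to $\varTwo$. If $\LTm = \La{\varTwo}{\LTm'}$, we may assume $\varTwo \notin \Fv{\LTmTwo} \cup \{\Var\}$ by $\alpha$-conversion; then both sides expand to $\La{\varTwo}{\Cbn{\LTm'}\Sub{\Cbn{\LTmTwo}}{\Var}}$ and $\La{\varTwo}{\Cbn{(\LTm'\Sub{\LTmTwo}{\Var})}}$, which agree by the induction hypothesis. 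If $\LTm = \LTm_1\LTm_2$, both sides expand to $\App{\Cbn{\LTm_1}\Sub{\Cbn{\LTmTwo}}{\Var}}{\Bang{\Cbn{\LTm_2}\Sub{\Cbn{\LTmTwo}}{\Var}}}$ modulo the IH applied to $\LTm_1$ and $\LTm_2$ (noting that $\Bang{(\cdot)}$ commutes with capture-avoiding substitution by definition).

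For item~\ref{lemma:substitution.cbv} (\CbV), the only delicate case is the variable one, and this is precisely why the hypothesis $\Cbv{\LTmTwo} = \Bang{\TmTwo}$ is needed. If $\LTm = \Var$, then $\Cbv{\Var}\Sub{\TmTwo}{\Var} = \Bang{\Var}\Sub{\TmTwo}{\Var} = \Bang{\TmTwo} = \Cbv{\LTmTwo} = \Cbv{(\Var\Sub{\LTmTwo}{\Var})}$; the last equality uses the hypothesis crucially, since without it substituting $\Cbv{\LTmTwo}$ inside the box $\Bang{\Var}$ would yield $\Bang{\Cbv{\LTmTwo}}$, not $\Cbv{\LTmTwo}$. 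For $\varTwo \neq \Var$, both sides equal $\Bang{\varTwo}$. Abstractions are handled by the IH together with $\alpha$-conversion, as in the \CbN\ case. Applications are routine: $\Cbv{(\LTm_1\LTm_2)}\Sub{\TmTwo}{\Var} = \App{\Der{(\Cbv{\LTm_1}\Sub{\TmTwo}{\Var})}}{\Cbv{\LTm_2}\Sub{\TmTwo}{\Var}}$, which equals $\Cbv{((\LTm_1\LTm_2)\Sub{\LTmTwo}{\Var})}$ by the IH on $\LTm_1$ and $\LTm_2$.

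The main (minor) obstacle is really just keeping track, in the \CbV\ case, of why the hypothesis on $\LTmTwo$ is exactly what is required: it ensures that the substituted term $\TmTwo$ is the content of the box produced by $\Cbv{(\cdot)}$, so that the decoration by $\oc$ at variable leaves of $\Cbv{\LTm}$ reconstructs precisely $\Cbv{\LTmTwo}$ after substitution. In particular, observe that $\Cbv{\LTmTwo} \in \ValueSet$ iff $\LTmTwo$ is a $\lambda$-value, which is exactly the class of arguments allowed in the root-step $\Root{\Betav}$ of \CbV; this foreshadows the soundness/completeness diagrams for $\Cbv{(\cdot)}$ announced in \S\ref{sect:embedding}.
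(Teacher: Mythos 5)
Your proof is correct and follows essentially the same route as the paper's: a structural induction on $\LTm$ with the variable, abstraction (under the usual $\alpha$-conversion convention), and application cases, invoking the induction hypothesis and the fact that the translations commute with the term constructors. Your closing observation that the hypothesis on $\LTmTwo$ holds exactly when $\LTmTwo$ is a $\lambda$-value is also the remark the paper makes immediately after the lemma.
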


\begin{proof}   The proofs of both points are by induction on 
	$\LTm \in \Lambda$.
	\begin{enumerate}
		\item 
		\begin{itemize}
			\item \emph{Variable:} If $\LTm$ is a variable then there are two subcases.
			If $\LTm \Defeq \Var$ then 
			$\Cbn{\LTm} = {\Var}$, so $\Cbn{\LTm} \Sub{\Cbn{\LTmTwo}}{\Var} = \Cbn{\LTmTwo} = \Cbn{(\LTm\Sub{\LTmTwo}{\Var})}$.
			Otherwise $\LTm \Defeq \VarTwo \neq \Var$ and then $\Cbn{\LTm} = \VarTwo$, hence $\Cbn{\LTm} \Sub{\Cbn{\LTmTwo}}{\Var} = \VarTwo = \Cbn{(\LTm\Sub{\LTmTwo}{\Var})}$.
			
			\item \emph{Abstraction:} If $\LTm \Defeq \La{\VarTwo}{\LTmThree}$ then $\Cbn{\LTm} = \La{\VarTwo}{\Cbn{\LTmThree}}$ 
			(we 
			suppose without loss of generality 
			$\VarTwo \notin \Fv{\LTmTwo} \cup \{\Var\}$). 
			By 
			\Ih, $\Cbn{\LTmThree} \Sub{\Cbn{\LTmTwo}\!}{\Var} = \Cbn{(\LTmThree \Sub{\LTmTwo}{\Var})}$ and so $\Cbn{\LTm} \Sub{\Cbn{\LTmTwo}\!}{\Var} = \La{\VarTwo}{(\Cbn{\LTmThree} \Sub{\Cbn{\LTmTwo}\!}{\Var})} = \La{\VarTwo}{\Cbn{(\LTmThree \Sub{\LTmTwo}{\Var})}} = \Cbn{(\LTm\Sub{\LTmTwo}{\Var})}$.
			
			\item \emph{Application:} If $\LTm \Defeq \LTmThree\LTmFour$ then $\Cbn{\LTm} = \App{\Cbn{\LTmThree}}{\Bang{\Cbn{\LTmFour}}}$.
			By 
			\Ih, $\Cbn{\LTmThree} \Sub{\Cbn{\LTmTwo}}{\Var} = \Cbn{(\LTmThree \Sub{\LTmTwo}{\Var})}$ and $\Cbn{\LTmFour} \Sub{\Cbn{\LTmTwo}}{\Var} \allowbreak= \Cbn{(\LTmFour \Sub{\LTmTwo}{\Var})}$.
			So, 
			$
			\Cbn{\LTm} \Sub{\Cbn{\LTmTwo}}{\Var} = \App{\Cbn{\LTmThree} \Sub{\Cbn{\LTmTwo}}{\Var}}{\Bang{(\Cbn{\LTmFour} \Sub{\Cbn{\LTmTwo}}{\Var})}} = \App{\Cbn{(\LTmThree\Sub{\LTmTwo}{\Var})}}{\Bang{({\Cbn{(\LTmFour\Sub{\LTmTwo}{\Var})}} )}} \allowbreak= \Cbn{(\LTm\Sub{\LTmTwo}{\Var})}$.
		\end{itemize}
		
		\smallskip
		\item 
		\begin{itemize}
			\item \emph{Variable:} If $\LTm$ is a variable then there are two subcases.
			If $\LTm \Defeq \Var$ then 
			$\Cbv{\LTm} = \Bang{\Var}$, so $\Cbv{\LTm} \Sub{\TmTwo}{\Var} = \Bang{\TmTwo} = \Cbv{\LTmTwo} = \Cbv{(\LTm\Sub{\LTmTwo}{\Var})}$.
			Otherwise $\LTm \Defeq \VarTwo \neq \Var$ and then $\Cbv{\LTm} = \Bang{\VarTwo}$, hence $\Cbv{\LTm} \Sub{\TmTwo}{\Var} = \Bang{\VarTwo} = \Cbv{(\LTm\Sub{\LTmTwo}{\Var})}$.
			
			\item \emph{Application:} If $\LTm \Defeq \LTmFive\LTmFour$ then $\Cbv{\LTm} = \App{\Der{\Cbv{\LTmFive}}}{\Cbv{\LTmFour}}$\!.
			By 
			\Ih, $\Cbv{\LTmFive} \Sub{\TmTwo}{\Var} = \Cbv{(\LTmFive \Sub{\LTmTwo}{\Var})}$ and $\Cbv{\LTmFour} \Sub{\TmTwo}{\Var} \allowbreak= \Cbv{(\LTmFour \Sub{\LTmTwo}{\Var})}$.
			So, $\Cbv{\LTm} \Sub{\TmTwo}{\Var} \allowbreak= \App{ \Derp{\Cbv{\LTmFive} \Sub{\TmTwo}{\Var}}}{\Cbv{\LTmFour} \Sub{\TmTwo}{\Var}} =\allowbreak \App{ \Derp{\Cbv{\LTmFive\Sub{\LTmTwo}{\Var}}}}\allowbreak{\Cbv{(\LTmFour\Sub{\LTmTwo}{\Var})}} \allowbreak= \Cbv{(\LTm\Sub{\LTmTwo}{\Var})}$.
									
			\item \emph{Abstraction:} If $\LTm \Defeq \La{\!\VarTwo}{\LTmThree}$ then $\Cbv{\LTm} \!=\! \Bang{(\La{\!\VarTwo}{\Cbv{\LTmThree}})}$ 
			(suppose without loss of generality 
			$\VarTwo \notin \Fv{\LTmTwo} \cup \{\Var\}$). 
			By 
			\Ih\ $\Cbv{\LTmThree}\! \Sub{\TmTwo}{\Var} = \Cbv{(\LTmThree \Sub{\LTmTwo}{\Var})}$\!, 
			so $\Cbv{\LTm} \!\Sub{\TmTwo}{\Var} = \Bangp{\La{\!\VarTwo}{(\Cbv{\LTmThree} \!\Sub{\TmTwo}{\Var})}} = \Bangp{\La{\!\VarTwo}{\Cbv{(\LTmThree \Sub{\LTmTwo}{\Var})}}} = \Cbv{(\LTm\Sub{\LTmTwo}{\Var})}$\!.
			\qedhere
		\end{itemize}
	\end{enumerate}
\end{proof}

Note that the hypothesis about $\LTmTwo$ in \Reflemmap{substitution}{cbv} is fulfilled 
if and only if $\LTmTwo$ is a $\lambda$-value.

\begin{remark}[\CbV\ translation is $\ValScript$-normal]
	\label{rmk:cbv-val-normal}
	It is immediate to prove by induction on 
	$\LTm \in \Lambda$ that $\Cbv{\LTm}$ is $\ValScript$-normal, so if $\Cbv{\LTm} \ToBang \TmTwo_0 \ToVal \TmTwo$ then the only $\ValScript$-redex in $\TmTwo_0$ has been created by 
	the step $\Cbv{\LTm} \ToBang \TmTwo_0$ and is absent in $\Cbv{\LTm}$.
\end{remark}

\paragraph{Simulating  \CbN\ and \CbV\ reductions into the bang calculus.}
We can now show that the \CbN\ translation $\Cbn{(\cdot)}$ (\Resp \CbV\ translation $\Cbv{(\cdot)}$) from the \CbN\ (\Resp \CbV) $\lambda$-calculus into the bang calculus is \emph{sound} and \emph{complete}: it maps $\beta$-reductions (\Resp $\Betav$-reductions) of the $\lambda$-calculus into $\Tot$-reductions of the bang calculus, and conversely $\Tot$-reductions\,---\,when  restricted to the image of the translation\,---\,into $\beta$-reductions (\Resp $\Betav$-reductions). 
Said differently, the target of the \CbN\ (\Resp \CbV) translation into the bang calculus is a \emph{conservative} extension of the \CbN\ (\Resp \CbV) $\lambda$-calculus.

\newcounter{thm:embedding}
\addtocounter{thm:embedding}{\value{definition}}
\begin{theorem}[Simulation of \CbN\ and \CbV\ $\lambda$-calculi]
\label{thm:embedding}
  Let
  $\LTm$ be a $\lambda$-term. 
  \begin{enumerate}
    \item\label{thm:embedding.cbn} \emph{Conservative extension of \CbN\ $\lambda$-calculus:}\\
    \emph{Soundness:} If $\LTm \ToBeta \LTm'$ then $\Cbn{\LTm} \ToVal \Cbn{{\LTm'}}$ (and $\Cbn{\LTm} \ToTot \Cbn{{\LTm'}}$); \\
    \emph{Completeness:} Conversely, if $\Cbn{\LTm} \ToTot \TmTwo$ then $\Cbn{\LTm} \ToVal \TmTwo = \Cbn{{\LTm'}}$ and $\LTm \ToBeta \LTm'$ for some $\lambda$-term $\LTm'$.
    \item\label{thm:embedding.cbn-head} \emph{Conservative extension of \ground\ \CbN\ $\lambda$-calculus:}\\
    \emph{Soundness:} If $\LTm \WToBeta \LTm'$ then $\Cbn{\LTm} \WToVal \Cbn{{\LTm'}}$ (and $\Cbn{\LTm} \WToTot \Cbn{{\LTm'}}$);\\
    \emph{Completeness:} Conversely, if $\Cbn{\LTm} \WToTot \TmTwo$ then $\Cbn{\LTm} \WToVal \TmTwo = \Cbn{{\LTm'}}$ and \mbox{$\LTm \WToBeta \LTm'$} for some $\lambda$-term $\LTm'$.
    \item\label{thm:embedding.cbv}
    \emph{Conservative extension of \CbV\ $\lambda$-calculus:}\\
    \emph{Soundness:} If $\LTm \To{\Betav} \LTm'$ then $\Cbv{\LTm} \ToBang\ToVal \Cbv{{\LTm'}}$ (and hence $\Cbv{\LTm} \ToTot\ToTot \Cbv{{\LTm'}}$);\\
    \emph{Completeness:} Conversely, if $\Cbv{\LTm} \ToBang\ToVal \TmTwo$ then $\TmTwo = \Cbv{{\LTm'}}$ and $\LTm \ToBetav \LTm'$ for some $\lambda$-term $\LTm'$.
    \item\label{thm:embedding.cbv-head}
    \emph{Conservative extension of \ground\ \CbV\ $\lambda$-calculus:} \\
    \emph{Soundness:} If $\LTm \WToBetav \LTm'$ then $\Cbv{\LTm} \WToBang\WToVal \Cbv{{\LTm'}}$ (and hence $\Cbv{\LTm} \WToTot\WToTot \Cbv{{\LTm'}}$);\\
    \emph{Completeness:} Conversely, if $\Cbv{\LTm} \WToBang\WToVal \TmTwo$ then $\TmTwo = \Cbv{{\LTm'}}$ and $\LTm \WToBetav \LTm'$ for some $\lambda$-term $\LTm'$.
  \end{enumerate}
\end{theorem}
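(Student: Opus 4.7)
The plan is to handle all four items uniformly, each decomposing into a soundness half and a completeness half. For soundness I reduce to the root-step case by induction on the surrounding (possibly ground) reduction context, exploiting that both translations lift to contexts in the expected way: every $\lambda$-context $\LCtx$ induces a bang-calculus context that collects the translations of the hole's surroundings, and this lifting sends ground $\lambda$-contexts to ground bang-calculus contexts. The key alignment for the ground \CbV\ case is that $\Cbv{(\La{\Var\,}{\LTm})} = \Bang{(\La{\Var\,}{\Cbv{\LTm}})}$ wraps every abstraction under a box, so ``no reduction under $\oc$'' in the bang calculus exactly mirrors ``no reduction under $\lambda$'' in $\VCtx$.

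The root cases are direct. For \CbN, $(\La{\Var\,}{\LTm_0})\LTm_1 \Root{\beta} \LTm_0\Sub{\LTm_1}{\Var}$ translates to $\App{\La{\Var\,}{\Cbn{\LTm_0}}}{\Bang{\Cbn{\LTm_1}}} \Root{\ValScript} \Cbn{\LTm_0}\Sub{\Cbn{\LTm_1}}{\Var} = \Cbn{(\LTm_0\Sub{\LTm_1}{\Var})}$ by \Reflemmap{substitution}{cbn}. For \CbV, $(\La{\Var\,}{\LTm_0})\LVal \Root{\Betav} \LTm_0\Sub{\LVal}{\Var}$ translates to $\App{\Derp{\Bang{(\La{\Var\,}{\Cbv{\LTm_0}})}}}{\Cbv{\LVal}} \ToBang \App{\La{\Var\,}{\Cbv{\LTm_0}}}{\Cbv{\LVal}} \ToVal \Cbv{(\LTm_0\Sub{\LVal}{\Var})}$, where we use that $\Cbv{\LVal}$ is always a box (a direct check on the two clauses defining $\Cbv{(\cdot)}$ on $\lambda$-values) so that \Reflemmap{substitution}{cbv} applies.

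Completeness in the \CbN\ items is easy: an induction on $\LTm \in \Lambda$ shows that $\Der{}$ never appears in $\Cbn{\LTm}$, so every $\Tot$-reduct of $\Cbn{\LTm}$ is a $\ValScript$-reduct whose redex $\App{\La{\Var\,}{\tm'}}{\Val'}$ must be inherited from a $\beta$-redex $(\La{\Var\,}{\LTm_0})\LTm_1$ in $\LTm$, and rebuilding via \Reflemmap{substitution}{cbn} supplies the required $\LTm'$. The main obstacle is the \CbV\ completeness of the last two items: one must argue that a $\ToBang \cdot \ToVal$ sequence on $\Cbv{\LTm}$ encodes exactly one $\Betav$-step. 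Here \Refrmk{cbv-val-normal} is decisive: because $\Cbv{\LTm}$ is $\ValScript$-normal, the $\ValScript$-redex in the intermediate term is the unique one just created by the $\Der{}$-step, which pins that step down inside a subterm of the shape $\App{\Der{\Cbv{\LTm_0}}}{\Cbv{\LTm_1}}$ (the only site where $\Der{}$ can appear in a \CbV\ translation). For the subsequent $\ValScript$-redex to exist, the exposed subterm must be an abstraction, forcing $\LTm_0$ to be an abstraction (not a variable), and $\Cbv{\LTm_1}$ must be a box, forcing $\LTm_1$ to be a $\lambda$-value. Injectivity of $\Cbv{(\cdot)}$ together with \Reflemmap{substitution}{cbv} then identify the reduct as $\Cbv{\LTm'}$ for some $\LTm'$ with $\LTm \ToBetav \LTm'$, and the ground refinement follows by requiring the surrounding context to be ground throughout.
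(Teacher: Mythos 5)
Your proposal is correct and follows essentially the same route as the paper's proof: induction reducing everything to the root-step cases via the substitution lemmas, the observation that $\Der{}$ is absent from $\Cbn{\LTm}$ (so $\ToTot$ collapses to $\ToVal$ there) for \CbN\ completeness, and \Refrmk{cbv-val-normal} to pin the $\ValScript$-redex of the intermediate term to the site of the preceding $\Derel$-step for \CbV\ completeness. The only cosmetic difference is that you induct on the reduction context where the paper inducts on the ($\lambda$- or bang-)term, which yields the same case analysis.
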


\begin{proof}
	\begin{enumerate}
		\item \emph{Soundness:} We prove by induction on the $\lambda$-term $\LTm$ that if $\LTm \ToBeta \LTm'$ then $\Cbn{\LTm} \ToVal \Cbn{{\LTm'}}$ (this implies $\Cbn{\LTm} \ToTot \Cbn{{\LTm'}}$, since $\ToVal \,\subseteq\, \ToTot$). 
		According to the definition of $\LTm \ToBeta \LTm'$, there are the following cases:
		\begin{itemize}
			\item \emph{Root-step}, \Ie $\LTm \Defeq (\La{\Var}{\LTmThree})\LTmTwo \Root{\beta} \LTmThree\Sub{\LTmTwo}{\Var} \Eqdef {\LTm'}$: by \Reflemmap{substitution}{cbn}, 
			$\Cbn{\LTm} = \App{\La{\Var}{\Cbn{\LTmThree}}}{\Bang{\Cbn{\LTmTwo}}} \allowbreak\Root{\ValScript} \Cbn{\LTmThree} \Sub{\Cbn{\LTmTwo}\!}{\Var} = \Cbn{{\LTm'}}$.
			
			\item  \emph{Abstraction}, \Ie $\LTm \Defeq \La{\Var}{\LTmThree} \ToBeta \La{\Var}{\LTmThree'} \Eqdef {\LTm'}$ with $\LTmThree \ToBeta \LTmThree'$: by 
			\Ih, $\Cbn{\LTmThree} \ToVal \Cbn{{\LTmThree'}}$, thus $\Cbn{\LTm} = \La{\Var}{\Cbn{\LTmThree}} \ToVal \La{\Var}{\Cbn{{\LTmThree'}}} = \Cbn{{\LTm'}}$.
			
			\item \emph{Application left}, \Ie $\LTm \Defeq \LTmThree\LTmTwo \ToBeta \LTmThree'\LTmTwo \Eqdef {\LTm'}$ with $\LTmThree \ToBeta \LTmThree'$:
			analogous to the previous case.
			
			\item \emph{Application right}, \Ie $\LTm \Defeq \LTmTwo\LTmThree \ToBeta \LTmTwo\LTmThree' \!\Eqdef \LTm'$ with $\LTmThree \ToBeta\! \LTmThree'$: 
			by 
			\Ih $\Cbn{\LTmThree} \ToVal \Cbn{{\LTmThree'}}$\!, so $\Cbn{\LTm} = \App{\Cbn{\LTmTwo}}{\Bang{{\Cbn{\LTmThree}}}} \!\ToVal \App {\Cbn{\LTmTwo}} {\Bang{{\Cbn{{\LTmThree'}}}}} \!= \Cbn{{\LTm'}}$\!.
		\end{itemize}
		
		\emph{Completeness:} First, observe that $\Cbn{\LTm} \ToTot \TmTwo$ entails $\Cbn{\LTm} \ToVal \TmTwo$ since $\Der{\!}$ does not occur in $\Cbn{\LTm}$, hence $\Cbn{\LTm}$ is $\Derel$-normal.
		We prove by induction on the $\lambda$-term $\LTm$ that if $\Cbn{\LTm} \ToVal \TmTwo$ then $\TmTwo = \Cbn{{\LTm'}}$ and $\LTm \ToBeta \LTm'$ for some $\lambda$-term $\LTm'$. 
		According to the definition of $\Cbn{\LTm} \ToVal \TmTwo$, there are the following cases:
		\begin{itemize}
			\item \emph{Root-step}, \Ie $\Cbn{\LTm} \Defeq \App{\La{\Var}{\Cbn{\LTmThree}}}{\Bang{\Cbn{\LTmFour}}} \allowbreak\Root{\ValScript} \Cbn{\LTmThree} \Sub{\Cbn{\LTmFour}}{\Var} \Eqdef \TmTwo$: 
			by \Reflemmap{substitution}{cbn} $\TmTwo = \Cbn{(\LTmThree \Sub{\LTmFour}{\Var})}$, so $\LTm = (\La{\Var}{\LTmThree})\LTmFour \Root{\beta} \LTmThree\Sub{\LTmFour}{\Var} \Eqdef {\LTm'}$ where $\Cbn{{\LTm'}} = \TmTwo$.
			
			\item  \emph{Abstraction}, \Ie $\Cbn{\LTm} \Defeq \La{\Var}{\Cbn{\LTmThree}} \ToVal \La{\Var}{\TmTwo'} \Eqdef \TmTwo$ with $\Cbn{\LTmThree} \ToVal \TmTwo'$: by 
			\Ih, there is a $\lambda$-term $\LTmThree'$ such that $\Cbn{{\LTmThree'}} = \TmTwo'$ and $\LTmThree \ToBeta \LTmThree'$, thus $\LTm = \La{\Var}{\LTmThree} \ToBeta \La{\Var}{\LTmThree'} \Eqdef {\LTm'}$ where $\Cbn{{\LTm'}} = \La{\Var}\Cbn{{\LTmThree'}} = \TmTwo$.
			
			\item \emph{Application left}, \Ie $\Cbn{\LTm} \Defeq \App{\Cbn{\LTmThree}}{\Bang{{\Cbn{\LTmFour}}}} \ToVal \App{\TmTwo'}{\Bang{{\Cbn{\LTmFour}}}} \Eqdef \TmTwo$ with $\Cbn{\LTmThree} \ToVal \TmTwo'$: 
			analogously to above.
			
			\item \emph{Application right}, \Ie $\Cbn{\LTm} \!\Defeq \App{\Cbn{\LTmFour}}{\Bang{{\Cbn{\LTmThree}}}} \!\ToVal \App {\Cbn{\LTmFour}} \Bang{\TmTwo'} \!\Eqdef \TmTwo$ with $\Cbn{\LTmThree} \ToVal \TmTwo'$: by 
			\Ih, there is a $\lambda$-term $\LTmThree'$ such that $\Cbn{{\LTmThree'}} = \TmTwo'$ and $\LTmThree \ToBeta \LTmThree'$, so $\LTm = \LTmFour\LTmThree \ToBeta \LTmFour\LTmThree' \Eqdef \LTm'$ with $\Cbn{{\LTm'}} = \App{\Cbn{\LTmFour}}{\Bang{\Cbn{{\LTmThree'}}}} = \TmTwo$.
		\end{itemize}
		
		\item Since $\Root{\beta}$ is simulated by $\Root{\ValScript}$ and vice-versa (see the root-cases 
		above), 
		\Refthmp{embedding}{cbn-head} is proved analogously to the proof of \Refthmp{embedding}{cbn} ($\WToVal$ replaces $\ToVal$, and $\WToBeta$ replaces $\ToBeta$), with the difference that, by definition, $\WToBeta$ and $\WToVal$ do not give rise to the case \emph{Application right}.
		
		\item \emph{Soundness:} We prove by induction on the $\lambda$-term $\LTm$ that if $\LTm \ToBetav \LTm'$ then $\Cbv{\LTm} \ToDer\ToVal \Cbv{{\LTm'}}$.
		According to the definition of $\LTm \ToBetav \LTm'$, there are the following cases:
		\begin{itemize}
			\item \emph{Root-step}, \Ie $\LTm \Defeq (\La{\Var}{\LTmThree})\LVal \Root{\Betav} \LTmThree\Sub{\LVal}{\Var} \Eqdef \LTm'$ where $\LVal$ is a $\lambda$-value, \Ie a variable or an abstraction: then $\Cbv{\LVal} = \Bang{\TmTwo}$ for some $\TmTwo \in \BangSet$, hence $\Cbv{\LTm} = \App{\Der{\Bang{(\La{\Var}{\Cbv{\LTmThree}})}}}{\Cbv{\LVal}} \allowbreak\WToDer\allowbreak \App{\La{\Var}{\Cbv{\LTmThree}}}{\Cbv{\LVal}} \Root{\ValScript}\allowbreak \Cbv{\LTmThree} \Sub{\TmTwo}{\Var} = \Cbv{{\LTm'}}$ by \Reflemmap{substitution}{cbv} (recall that $\WToDer \,\subseteq\,\ToDer$).
			
			\item  \emph{Abstraction}, \Ie $\LTm \Defeq \La{\Var}{\LTmThree} \ToBetav \La{\Var}{\LTmThree'} \Eqdef \LTm'$ with $\LTmThree \ToBetav \LTmThree'$: by 
			\Ih, $\Cbv{\LTmThree} \ToDer\ToVal \Cbv{{\LTmThree'}}$; therefore $\Cbv{\LTm} = \Bang{(\La{\Var}{\Cbv{\LTmThree}})} \ToDer\ToVal \Bang{(\La{\Var}{\Cbv{{\LTmThree'}}})} = \Cbv{{\LTm'}}$.
			
			\item \emph{Application left}, \Ie $\LTm \Defeq \LTmThree\LTmTwo \ToBetav \LTmThree'\LTmTwo \Eqdef \LTm'$ with $\LTmThree \ToBetav \LTmThree'$: by 
			\Ih $\Cbv{\LTmThree} \ToDer\ToVal \Cbv{{\LTmThree'}}$; therefore $\Cbv{\LTm} = \App{\Der{\Cbv{\LTmThree}}}{{\Cbv{\LTmTwo}}} \ToDer\ToVal \App{\Der{\Cbv{{\LTmThree'}}}}{{\Cbv{\LTmTwo}}} = \Cbv{{\Tm'}}$.
			
			\item \emph{Application right}, \Ie $\LTm \Defeq \LTmTwo\LTmThree \ToBetav \LTmTwo\LTmThree' \!\Eqdef \LTm'$ with $\LTmThree \ToBetav\! \LTmThree'$: 
			analogous to the previous case.
		\end{itemize}
		
		\emph{Completeness:} 
		We prove by induction on $\TmTwo_0 \in \BangSet$ that if $\Cbv{\LTm} \ToDer \TmTwo_0 \ToVal \TmTwo$ then $\TmTwo = \Cbv{{\LTm'}}$ and $\LTm \ToBetav \LTm'$ for some $\lambda$-term $\LTm'$.
		According to the definition of $\TmTwo_0 \ToVal \TmTwo$, there are the following cases:
		\begin{itemize}
			\item \emph{Root-step}, \Ie $\TmTwo_0 \Defeq \App{\La{\Var}{\TmThree}}{\Bang{\TmFour}} \Root{\ValScript} \TmThree\Sub{\TmFour}{\Var} \Eqdef \TmTwo$.
			By \Refrmk{cbv-val-normal}, necessarily $\Cbv{\LTm} = \App{\Der{\Bang{(\La{\Var}{\TmThree})}}}{\Bang{\TmFour}}$ and hence $\LTm = (\La{\Var}{\LTmThree_0})\LVal$ for some $\lambda$-term $\LTmThree_0$ and some $\lambda$-value $\LVal$ such that $\Cbv{\LTmThree_0} = \TmThree$ and $\Cbv{\LVal} = \Bang{\TmFour}$.
			Note that $\Cbv{\LTm} \WToDer \TmTwo_0$. 
			Let $\LTm' \Defeq \LTmThree_0 \Sub{\LVal}{\Var}$: then, $\LTm \Root{\Betav} \LTm'$ and $\Cbv{{\LTm'}} = \Cbv{\LTmThree_0} \Sub{\TmFour}{\Var} = \TmTwo$ by \Reflemmap{substitution}{cbv}.
			
			\item \emph{Abstraction}, \Ie $\TmTwo_0 \Defeq \La{\Var}{\TmThree_0} \ToVal \La{\Var}{\TmThree'} \Eqdef \TmTwo$ with $\TmThree_0 \ToVal \TmThree'$.
			This case is impossible because, according to \Refrmk{cbv-val-normal}, necessarily $\Cbv{\LTm} = \La{\Var}{\TmThree}$ for some $\ValScript$-normal $\TmThree \in \BangSet$ such that $\TmThree \ToDer \TmThree_0$, but there is no $\lambda$-term $\LTm$ such that $\Cbv{\LTm}$ is an abstraction.
			\item \emph{Dereliction}, \Ie $\TmTwo_0 \Defeq \Der{\TmThree_0} \ToVal \Der{\TmThree'} \Eqdef \TmTwo$ with $\TmThree_0 \ToVal \TmThree'$.
			This case is impossible because, according to \Refrmk{cbv-val-normal}, necessarily $\Cbv{\LTm} = \Der{\TmThree}$  for some $\ValScript$-normal $\TmThree \in \BangSet$ such that $\TmThree \ToDer \TmThree_0$, but there is no $\lambda$-term $\LTm$ such that $\Cbv{\LTm}$ is a dereliction.
			
			\item \emph{Application left}, \Ie $\TmTwo_0 \Defeq \App{\TmThree_0}{\TmThree_1} \ToVal \App{\TmThree'}{\TmThree_1} \Eqdef \TmTwo$ with $\TmThree_0 \ToVal \TmThree'$.
			By \Refrmk{cbv-val-normal}, 
			$\Cbv{\LTm} = \App{\Der{\TmFive}}{\TmThree_1}$ for some $\ValScript$-normal $\TmFive \in \BangSet$ such that $\Der{\TmFive} \ToDer \TmThree_0$, and thus $\TmThree_0 = \Der{\TmFive_0}$ where $\TmFive \ToDer \TmFive_0$ (indeed $\TmFive = \Bang{\TmThree_0}$ is impossible because $\TmFive$ is $\ValScript$-normal), and hence $\TmThree' = \Der{\TmFive'}$ with $\TmFive_0 \ToVal \TmFive'$.
			So, $\LTm = \LTmFour\LTmFour_1$ for some $\lambda$-terms $\LTmFour$ and $\LTmFour_1$ such that $\Cbv{\LTmFour} = \TmFive$ and $\Cbv{\LTmFour_1} = \TmThree_1$ with $\Cbv{\LTmFour} \ToDer \TmFive_0 \ToVal \TmFive'$.
			By 
			\Ih, $\TmFive' = \Cbv{{\LTmFour'}}$ and $\LTmFour \ToBetav \LTmFour'$ for some $\lambda$-term $\LTmFour'$.
			Let $\LTm' \Defeq \LTmFour'\LTmFour_1$: then, $\LTm = \LTmFour\LTmFour_1 \ToBetav 
			\LTm'$ and $\Cbv{{\LTm'}} = \App{\Der{\Cbv{{\LTmFour'}}}}{\Cbv{\LTmFour_1}} = \TmTwo$.
			
			\item \emph{Application right}, \Ie $\TmTwo_0 \Defeq \App{\TmThree_1}{\TmThree_0} \ToVal \App{\TmThree_1}{\TmThree'} \Eqdef \TmTwo$ with $\TmThree_0 \ToVal \TmThree'$.
			By \Refrmk{cbv-val-normal}, necessarily $\Cbv{\LTm} = \App{\Der{\TmFive_1}}{\TmThree}$ for some $\ValScript$-normal $\TmThree, \TmFive_1 \in \BangSet$ such that $\Der{\TmFive_1} = \TmThree_1$ and $\TmThree \ToDer \TmThree_0$.
			So, $\LTm = \LTmFour_1\LTmFour$ for some $\lambda$-terms $\LTmFour_1$ and $\LTmFour$ such that $\Cbv{\LTmFour_1} = \TmFive_1$ and $\Cbv{\LTmFour} = \TmThree$ with $\Cbv{\LTmFour} \ToDer \TmThree_0 \ToVal \TmThree'$.
			By 
			\Ih, $\TmThree' = \Cbv{{\LTmFour'}}$ and $\LTmFour \ToBetav \LTmFour'$ for some $\lambda$-term $\LTmFour'$.
			Let $\LTm' \Defeq \LTmFour_1\LTmFour'$: then, $\LTm = \LTmFour_1\LTmFour \ToBetav 
			\LTm'$ and $\Cbv{{\LTm'}} = \App{\Der{\Cbv{\LTmFour_1}}}{\Cbv{{\LTmFour'}}} = \TmTwo$.
			
			\item \emph{Box}, \Ie $\TmTwo_0 \Defeq \Bang{\TmThree_0} \ToVal \Bang{\TmThree'} \Eqdef \TmTwo$ with $\TmThree_0 \ToVal \TmThree'$.
			According to \Refrmk{cbv-val-normal}, necessarily $\Cbv{\LTm} = \Bang{\TmThree}$ for some $\ValScript$-normal $\TmThree \in \BangSet$ such that $\TmThree \ToDer \TmThree_0$.
			So, $\LTm = \La{\Var}{\LTmFour}$ (since $\Cbv{\LTm}$ is a box and $\Cbv{\Var}$ is $\Derel$-normal) for some $\lambda$-term $\LTmFour$ such that $\TmThree = \La{\Var}\Cbv{\LTmFour}$, and hence there are $\TmFive_0, \TmFive' \in \BangSet$ such that $\TmThree_0 = \La{\Var}{\TmFive_0}$ and $\TmThree' = \La{\Var}{\TmFive'}$  with $\Cbv{\LTmFour} \ToDer \TmFive_0 \ToVal \TmFive'$.
			By 
			\Ih, $\TmFive' = \Cbv{{\LTmFour'}}$ and $\LTmFour \ToBetav \LTmFour'$ for some $\lambda$-term $\LTmFour'$.
			Let $\LTm' \Defeq \La{\Var}{\LTmFour'}$: then, $\LTm = \La{\Var}{\LTmFour} \ToBetav 
			\LTm'$ and $\Cbv{{\LTm'}} = \Bangp{\La{\Var}{\Cbv{{\LTmFour'}}}} = \TmTwo$.
		\end{itemize}
		
		\item Since $\Root{\Betav}$ is simulated by $\WToDer\Root{\ValScript}$ and vice-versa (see the root-cases 
		above), \Refthmp{embedding}{cbv-head} is proved analogously to the proof of \Refthmp{embedding}{cbv} (replace $\ToVal$ with $\WToVal$, and $\ToDer$ with $\WToDer$, as well as $\ToBetav$ with $\WToBetav$), with the difference that $\WToBetav$ does not give rise to the case \emph{Abstraction} (in the soundness proof) and \emph{Box} (in the completeness proof). 
		\qedhere
	\end{enumerate}
\end{proof}

So, 
the bang calculus can simulate $\beta$- and $\Betav$-reductions via $\Cbn{(\cdot)}$ and $\Cbv{(\cdot)}$ and, conversely, $\ValScript$-reductions in the targets of $\Cbn{(\cdot)}$ and $\Cbv{(\cdot)}$ correspond to $\beta$- and $\Betav$-reductions
. Also, these simulations are:
\begin{itemize}
  \item \emph{modular}, in the sense that \emph{\ground} $\beta$-reduction (including head $\beta$-reduction and weak head $\beta$-reduction) is simulated by \emph{\ground} $\ValScript$-reduction, and vice-versa (\Refthmp{embedding}{cbn-head}); 
  \emph{\ground} $\Betav\!$-reduction (including 
  head $\Betav\!$-reduction) is simulated by \emph{\ground} $\Derel$- and $\ValScript$-reductions, and vice-versa (\Refthmp{embedding}{cbv-head});
  
  \item \emph{quantitative sensitive}, meaning that \emph{one step} of (\ground) $\beta$-reduction corresponds exactly, via $\Cbn{(\cdot)}$, to \emph{one step} of (\ground) $\ValScript$-reduction, and vice-versa;
  \emph{one step} of (\ground) $\Betav$-reduction corresponds exactly, via $\Cbv{(\cdot)}$, to \emph{one step} of (\ground) $\ValScript$-reduction, and vice-versa.
\end{itemize}
The target of \CbN\ translation $\Cbn{(\cdot)}$  
into the bang calculus can be \emph{characterized syntactically} (\Refrmk{cbn-image}).

\begin{remark}[Image of \CbN\ translation]
	\label{rmk:cbn-image}
	The \CbN\ translation $\Cbn{(\cdot)}$ 
	is a \emph{bijection} from the set $\Lambda$ of $\lambda$-terms to the subset $\Cbn{\BangSet\!}$ of $\BangSet$ defined in \Reffig{targets}:
	\begin{figure}[!t]
  \centering
  \scalebox{0.9}{\parbox{1.08\linewidth}{
  \begin{align*}
    \text{\emph{Target of \CbN\ translation into} $\BangSet$:}		&& \Tm, \TmTwo &\Coloneqq \Var \mid \App{\Tm}{\Bang{\TmTwo}} \mid \La{\Var}{\Tm}			&&\textup{\!(set: } \Cbn{\BangSet\!}\textup{)}  \\
    \text{\emph{Target of \CbV\ translation into} $\BangSet$:}		&& \ImTmOne, \ImTmTwo &\Coloneqq \Bang{\ImVal} \mid \App{\Der{\ImTmOne}}{\ImTmTwo} \mid \App{\ImVal}{\ImTmOne} 					&&\textup{\!(set: }\Cbv{\BangSet\!}\textup{)} &\qquad\!\!\! \ImVal &\Coloneqq \Var \mid \La{\Var}{\ImTmOne} &&\textup{\!(set: }\Cbv{\BangSet_v}\textup{).}
  \end{align*}
  
  \vspace{-\baselineskip}
  }}
  \caption{Targets of \CbN\ and \CbV\ translations into the bang calculus.}
  \label{fig-targets}
\end{figure}

	$\Cbn{\LTm} \in \Cbn{\BangSet\!}$ for any $\LTm \in \Lambda$, and conversely,
	for any $\Tm \in \Cbn{\BangSet\!}$, there is a \emph{unique} $\LTm \in \Lambda$ such that $\Cbn{\Tm} = \LTm$. 
	According to the definition of $\Cbn{\BangSet\!}$ (\Reffig{targets}), the construct $\Der{\!}$ never occurs in any term in $\Cbn{\BangSet\!}$, hence 
	the relations $\ToDer$ and $\WToDer$ are empty and 
	$\ToVal \ =\ \ToTot$ and $\WToVal \,=\ \WToTot$ in $\Cbn{\BangSet\!}$.
\end{remark}

\Refthmps{embedding}{cbn}{cbn-head} and \Refrmk{cbn-image} mean that $\Cbn{\BangSet\!}$ endowed with the reduction $\ToVal$ (\Resp $\WToVal$)\,---\,which coincides with 
$\ToTot$ (\Resp $\WToTot$) in $\Cbn{\BangSet\!}$\,---\,is \emph{isomorphic} to \CbN\ (\Resp \ground\ \CbN) $\lambda$-calculus.
In particular, $\Cbn{(\cdot)}$ preserves normal forms (forth and back) and equates (via $\Tot$-equivalence) exactly \mbox{the same as $\beta$-equivalence.}

\newcounter{cor:cbn-preservation}
\addtocounter{cor:cbn-preservation}{\value{definition}}
\begin{corollary}[Preservations with respect to \CbN\ $\lambda$-calculus]
\label{cor:cbn-preservation}
  Let $\LTm, \LTmTwo \in \Lambda$.
  \begin{enumerate}
    \item\label{cor:cbn-preservation.equational-theory} \emph{\CbN\ equational theory:} $\LTm \simeq_\beta \LTmTwo$ iff $\Cbn{\LTm} \simeq_\ValScript \Cbn{\LTmTwo}$ iff $\Cbn{\LTm} \simeq_\Tot \Cbn{\LTmTwo}$.
    \item\label{cor:cbn-preservation.normal} \emph{\CbN\ normal forms:} $\LTm$ is (\ground) $\beta$-normal iff $\Cbn{\LTm}$ is (\ground) $\ValScript$-normal iff $\Cbn{\LTm}$ is (\ground) $\Tot$-normal.
  \end{enumerate}
\end{corollary}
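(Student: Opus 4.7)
The plan is to combine three ingredients: the soundness/completeness simulation of $\beta$-reduction by $\ValScript$-reduction (\Refthmp{embedding}{cbn}, and its ground variant \Refthmp{embedding}{cbn-head}); confluence of $\ToTot$ and of $\ToVal$ (\Refpropp{confluence}{ToTot} and \Reflemmap{basic-ToBang-ToVal}{ToVal-confluent}); and two structural observations from \Refrmk{cbn-image}, namely that $\Cbn{(\cdot)}$ is a bijection onto $\Cbn{\BangSet\!}$ and that, since the construct $\Der{}$ never occurs in $\Cbn{\BangSet\!}$, the reductions $\ToVal$ and $\ToTot$ (\Resp $\WToVal$ and $\WToTot$) coincide on $\Cbn{\BangSet\!}$. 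A small but crucial invariant I would record upfront is that $\Tot$-reductions from a term in $\Cbn{\BangSet\!}$ never leave $\Cbn{\BangSet\!}$: applying the completeness half of \Refthmp{embedding}{cbn} step by step along the reduction shows that each $\ToTot$-step from $\Cbn{\LTm}$ is actually a $\ToVal$-step whose target is again of the form $\Cbn{\LTm'}$.

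For \Refcorp{cbn-preservation}{equational-theory}, I would first settle the second equivalence. The inclusion $\simeq_\ValScript \ \subseteq \ \simeq_\Tot$ is trivial since $\ToVal \ \subseteq \ \ToTot$. For the converse, confluence of $\ToTot$ yields a term $\TmThree$ with $\Cbn{\LTm} \ToTot^* \TmThree \MRevTo{\Tot} \Cbn{\LTmTwo}$; by the invariant above, both reductions remain in $\Cbn{\BangSet\!}$ and hence are already $\ToVal^*$-reductions, giving $\Cbn{\LTm} \simeq_\ValScript \Cbn{\LTmTwo}$. For the first equivalence, the direction $\LTm \simeq_\beta \LTmTwo \Rightarrow \Cbn{\LTm} \simeq_\ValScript \Cbn{\LTmTwo}$ is a straightforward induction on the length of the $\beta$-equivalence chain using the soundness half of \Refthmp{embedding}{cbn}. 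For the converse, confluence of $\ToVal$ gives some $\TmThree$ with $\Cbn{\LTm} \ToVal^* \TmThree \MRevTo{\ValScript} \Cbn{\LTmTwo}$; iterating the completeness half along each reduction produces $\lambda$-terms $\LTmFour, \LTmFive$ with $\LTm \ToBeta^* \LTmFour$, $\LTmTwo \ToBeta^* \LTmFive$, and $\Cbn{\LTmFour} = \TmThree = \Cbn{\LTmFive}$; injectivity of $\Cbn{(\cdot)}$ (\Refrmk{cbn-image}) forces $\LTmFour = \LTmFive$, whence $\LTm \simeq_\beta \LTmTwo$.

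For \Refcorp{cbn-preservation}{normal}, the equivalence between $\Cbn{\LTm}$ being $\ValScript$-normal and being $\Tot$-normal (and likewise for their ground versions) is immediate from the identity of $\ToVal$ and $\ToTot$ (\Resp $\WToVal$ and $\WToTot$) on $\Cbn{\BangSet\!}$ noted in \Refrmk{cbn-image}. The equivalence between $\LTm$ being $\beta$-normal and $\Cbn{\LTm}$ being $\ValScript$-normal follows by a direct contrapositive argument: if $\LTm \ToBeta \LTm'$, then soundness of \Refthmp{embedding}{cbn} gives $\Cbn{\LTm} \ToVal \Cbn{\LTm'}$; conversely, if $\Cbn{\LTm} \ToVal \TmTwo$, then completeness gives $\LTm \ToBeta \LTm'$ for some $\LTm'$. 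The ground versions are proved identically using \Refthmp{embedding}{cbn-head} in place of \Refthmp{embedding}{cbn}.

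I expect no real obstacle. The only delicate point is the closure of $\Cbn{\BangSet\!}$ under $\ToTot$-reduction, which is not stated explicitly as a lemma but is an immediate corollary of the completeness half of \Refthmp{embedding}{cbn}; this closure is precisely what lets the confluence arguments, originally formulated at the level of $\ToTot$ on $\BangSet$, be transferred to the level of $\beta$-reduction on $\Lambda$ through the injective translation $\Cbn{(\cdot)}$.
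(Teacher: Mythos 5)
Your proposal is correct and follows essentially the same route as the paper: soundness/completeness of the simulation (Thm.~\ref{thm:embedding}.\ref{thm:embedding.cbn}-\ref{thm:embedding.cbn-head}), confluence, and the bijectivity of $\Cbn{(\cdot)}$ together with the coincidence of $\ToVal$ and $\ToTot$ on $\Cbn{\BangSet\!}$ (Rmk.~\ref{rmk:cbn-image}). The only (welcome) difference is that you make explicit the closure of $\Cbn{\BangSet\!}$ under $\ToTot$-reduction, which the paper leaves implicit when it iterates completeness along a multi-step reduction.
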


\begin{proof}
	\begin{enumerate}
%
		\item 
		The equivalence ``$\Cbn{\LTm} \simeq_\ValScript \Cbn{\LTmTwo}$ iff $\Cbn{\LTm} \simeq_\Tot \Cbn{\LTmTwo}$'' holds because $\ToVal \ =\ \ToTot$ in $\Cbn{\BangSet\!}$, which is the image of $\Cbn{(\cdot)}$ (\Refrmk{cbn-image}).
		If $\LTm \simeq_\beta \LTmTwo$ then $\LTm \ToBeta^* \LTmThree \MRevTo{\beta} \LTmTwo$ for some $\LTmThree \in \Lambda$, as $\ToBeta$ is confluent;
		by \Refthmp{embedding}{cbn} (soundness), $\Cbn{\LTm} \ToVal^* \Cbn{\LTmThree} \MRevTo{\ValScript} \Cbn{\LTmTwo}$ and so $\Cbn{\LTm} \simeq_\ValScript \Cbn{\LTmTwo}$.
		Conversely, if $\Cbn{\LTm} \simeq_\Tot \Cbn{\LTmTwo}$ then $\Cbn{\LTm} \ToTot^* \TmThree \, \MRevTo{\Tot} \Cbn{\LTmTwo}$ for some $\TmThree \in \BangSet$, since $\ToTot$ is confluent (\Refpropp{confluence}{ToTot});
		by \Refthmp{embedding}{cbn} (completeness) and bijectivity of $\Cbn{(\cdot)}$ (\Refrmk{cbn-image}), $\LTm \ToBeta^* \LTmThree \MRevTo{\beta} \LTmTwo$ for some $\lambda$-term $\LTmThree$ such that $\Cbn{\LTmThree} = \TmThree$, and therefore $\LTm \simeq_\beta \LTmTwo$.

		\item Immediate consequence of \Refthmps{embedding}{cbn}{cbn-head}.
		\qedhere
	\end{enumerate}
\end{proof}

The correspondence between \CbV\ $\lambda$-calculus and bang calculus is slightly more delicate: \CbV\ translation 
$\Cbv{(\cdot)}$ gives a \emph{sound} and \emph{complete} embedding of $\ToBetav$ into $\ToBang\ToVal$ (and similarly for their \ground\ variants), but it is not complete with respect to generic $\ToTot$. 
Indeed, \Refex{delta} and \Refex{delta-translated} have shown that $\Cbv{(\La{\Var}{\Var\Var})} = \Bang{\Delta\!'} \ToBang \Bang{\Delta\!}$, where $\Bang{\Delta\!}$ is $\Tot$-normal and there is no $\lambda$-term $\LTm$ such that $\Cbv{\LTm} = \Bang{\Delta\!}$.
Note that $\La{\Var}{\Var\Var}$ is $\Betav$-normal but $\Cbv{(\La{\Var}{\Var\Var})} = \Bang{\Delta\!'}$ is not $\Tot$-normal: 
in \CbV\ the analogous of \Refcorp{cbn-preservation}{normal} does not hold for $\Cbv{(\cdot)}$.

Actually, an analogous of \Refcorp{cbn-preservation}{equational-theory} for \CbV\ holds: \CbV\ translation preserves $\Betav$-equivalence in a sound and complete way with respect to $\Tot$-equivalence (see \Refcor{cbv-preservation-equivalence} below).
The proof requires a fine analysis of \CbV\ translation $\Cbv{(\cdot)}$.
First, 
we define two subsets $\Cbv{\BangSet\!}$ and $\Cbv{\BangSet_v}$ of $\BangSet$, see \Reffig{targets}. 
%
%
\begin{remark}[Image of \CbV\ translation]
	\label{rmk:cbv-image}
	If $\LTm \in \Lambda$ then $\Cbv{\LTm} \in \Cbv{\BangSet\!}$; in particular, if $\LVal \in \LambdaVal$ then $\Cbv{\LVal} = \Bang{\ImVal}$ for some $\ImVal \in \Cbv{\BangSet_v}$.
	Note that $\Cbv{(\cdot)}$ is not surjective in $\Cbv{\BangSet\!}$: $ \Bang{\Delta\!} \in \Cbv{\BangSet\!}$ but there is no $\lambda$-term $\LTm$ such that $\Cbv{\LTm} = \Bang{\Delta\!}$.
\end{remark}
 
We then define a forgetful 
map $\InvV{(\cdot)} \colon \Cbv{\BangSet\!} \cup \Cbv{\BangSet_v} \to \Lambda$ 
from terms 
$\ImTmOne \in \Cbv{\BangSet\!}$ and $\ImVal \in \Cbv{\BangSet_v}$ 
into $\lambda$-terms:
\begin{align*}
  \InvV{(\Bang{\ImVal})} &\Defeq \InvV{\ImVal} & \InvV{(\App{\Der{\ImTmOne}}{\ImTmTwo})} &\Defeq \InvV{\ImTmOne}\InvV{\ImTmTwo} & \InvV{(\App{\ImVal}{\ImTmOne})} &\Defeq \InvV{\ImVal}\InvV{\ImTmOne}\,; && \quad&
  \InvV{\Var} &\Defeq \Var & \InvV{(\La{\Var}{\ImTmOne})} &\Defeq \La{\Var}{\InvV{\ImTmOne}}.
\end{align*}
%
\newcounter{lemma:forgetful}
\addtocounter{lemma:forgetful}{\value{definition}}
\begin{lemma}[Properties of the forgetful 
	map $\InvV{(\cdot)}$]
\label{lemma:forgetful}
\hfill
\NoteProof{lemmaAppendix:forgetful}
  \begin{enumerate}
    \item\label{lemma:forgetful.inverse} \emph{Forgetful map 
    	is a left-inverse of \CbV\ translation:} For every $\LTm \in \Lambda$, one has $\InvV{\Cbv{\LTm}} = \LTm$.
    \item\label{lemma:forgetful.substitution} 
    \emph{Substitution:} $\ImTmOne \Sub{\ImVal}{\Var} \in \Cbv{\BangSet\!}$ with $\InvV{(\ImTmOne \Sub{\ImVal}{\Var})} = \InvV{\ImTmOne} \Sub{\InvV{\ImVal}\!}{\Var}$, 
    for any $\ImTmOne \in \Cbv{\BangSet\!}$ and $\ImVal
    \in \Cbv{\BangSet_v}$.

    \item\label{lemma:forgetful.reduction} 
    \emph{$\Tot$-reduction vs.~$\Betav$-reduction:}
    For any $\ImTmOne \in \Cbv{\BangSet\!}$ and $\Tm \in \BangSet$, if $\ImTmOne \ToTot \Tm$ then $\Tm \in \Cbv{\BangSet\!}$ and $\InvV{\ImTmOne} \ToBetav^= \InvV{\Tm}$.
  \end{enumerate}
\end{lemma}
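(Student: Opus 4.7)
The proof proceeds in the order of the three items, using the earlier items as the main technical inputs for the last one.

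Part 1 is a straightforward induction on the structure of the $\lambda$-term $\LTm$. Unfolding $\Cbv{(\cdot)}$ and $\InvV{(\cdot)}$ we get, in the variable case, $\InvV{\Cbv{\Var}} = \InvV{(\Bang{\Var})} = \InvV{\Var} = \Var$; in the abstraction case, $\InvV{(\Bang{(\La{\Var}{\Cbv{\LTm}})})} = \La{\Var}{\InvV{\Cbv{\LTm}}} = \La{\Var}{\LTm}$ by the induction hypothesis; and in the application case, $\InvV{(\App{\Der{\Cbv{\LTm}}}{\Cbv{\LTmTwo}})} = \InvV{\Cbv{\LTm}} \, \InvV{\Cbv{\LTmTwo}} = \LTm \LTmTwo$ by two applications of the induction hypothesis. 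No obstacle is expected.

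Part 2 is proved by mutual induction on $\ImTmOne \in \Cbv{\BangSet\!}$ and on $\ImVal' \in \Cbv{\BangSet_v}$, simultaneously establishing that $\ImVal' \Sub{\ImVal}{\Var} \in \Cbv{\BangSet_v}$ with $\InvV{(\ImVal' \Sub{\ImVal}{\Var})} = \InvV{\ImVal'} \Sub{\InvV{\ImVal}}{\Var}$. The base variable case uses the hypothesis $\ImVal \in \Cbv{\BangSet_v}$ to conclude that $\Var\Sub{\ImVal}{\Var} = \ImVal$ belongs to $\Cbv{\BangSet_v}$; the remaining productions $\La{\VarTwo}{\ImTmThree}$, $\Bang{\ImVal''}$, $\App{\Der{\ImTmThree}}{\ImTmFour}$, and $\App{\ImVal''}{\ImTmThree}$ are closed under substitution of a $\Cbv{\BangSet_v}$-term for a variable, and the forgetful map commutes with substitution construct-by-construct. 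This is routine once the mutual statement is stated.

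Part 3 is the main point, and I would prove it by induction on $\ImTmOne \in \Cbv{\BangSet\!}$, analysing the position and nature of the redex fired by $\ImTmOne \ToTot \Tm$ according to the three productions:
\begin{itemize}
  \item If $\ImTmOne = \Bang{\ImVal}$, the only available steps occur inside $\ImVal$, which forces $\ImVal = \La{\VarTwo}{\ImTmThree}$ with $\ImTmThree \ToTot \ImTmThree'$, and the induction hypothesis on $\ImTmThree$ concludes.
  \item If $\ImTmOne = \App{\Der{\ImTmTwo}}{\ImTmThree}$, no root-step is possible; a $\Derel$-step at the $\Der{}$ position fires only when $\ImTmTwo = \Bang{\ImVal}$, reducing $\ImTmOne$ to $\App{\ImVal}{\ImTmThree}$, which lies in $\Cbv{\BangSet\!}$ thanks to the third production and satisfies $\InvV{\ImTmOne} = \InvV{\Tm}$ (this is exactly the place where the reflexive part of $\ToBetav^=$ is needed); internal steps in $\ImTmTwo$ or $\ImTmThree$ use the induction hypothesis.
  \item If $\ImTmOne = \App{\ImVal}{\ImTmTwo}$, a root $\ValScript$-step fires precisely when $\ImVal = \La{\Var}{\ImTmThree}$ and $\ImTmTwo = \Bang{\ImVal''}$; then $\Tm = \ImTmThree \Sub{\ImVal''}{\Var}$ lies in $\Cbv{\BangSet\!}$ by Part 2, and since $\InvV{\ImVal''}$ is a $\lambda$-value one has $\InvV{\ImTmOne} = (\La{\Var}{\InvV{\ImTmThree}})\InvV{\ImVal''} \Root{\Betav} \InvV{\ImTmThree} \Sub{\InvV{\ImVal''}}{\Var} = \InvV{\Tm}$, again by Part 2; internal steps inside $\ImVal$ or $\ImTmTwo$ are handled by the induction hypothesis.
\end{itemize}

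The main, albeit mild, obstacle is the $\Derel$-step inside $\App{\Der{\Bang{\ImVal}}}{\ImTmThree}$: the contractum $\App{\ImVal}{\ImTmThree}$ fails to match the shape $\App{\Der{(\cdot)}}{(\cdot)}$, and would not belong to $\Cbv{\BangSet\!}$ were it not for the additional production $\App{\ImVal}{\ImTmOne}$ in the grammar of \Reffig{targets}; this step produces no actual $\Betav$-reduction, which is precisely why the statement uses $\ToBetav^=$ rather than $\ToBetav$. Once this interaction is spotted, the rest of the case analysis is mechanical and Parts 1--2 suffice.
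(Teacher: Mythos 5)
Your proof is correct and follows essentially the approach one expects (and that the paper's appendix proof takes): structural induction on $\LTm$ for item 1, a mutual induction on $\Cbv{\BangSet\!}$ and $\Cbv{\BangSet_v}$ for item 2, and a case analysis on the grammar of $\Cbv{\BangSet\!}$ for item 3, using item 2 for the $\ValScript$-root case. You also correctly isolate the one delicate point, namely that the $\Derel$-step $\App{\Der{\Bang{\ImVal}}}{\ImTmTwo} \ToDer \App{\ImVal}{\ImTmTwo}$ lands in the third production of the grammar and is erased by $\InvV{(\cdot)}$, which is exactly why the statement uses $\ToBetav^=$.
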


Despite the non-surjectivity of $\Cbv{(\cdot)}$ on $\Cbv{\BangSet}$, 
 \Reflemmap{forgetful}{reduction} and \Refrmk{cbv-image} mean that $\Cbv{\BangSet\!}$ is the set of terms in $\BangSet$ reachable by $\Tot$-reduction from \CbV\ translations of $\lambda$-terms (\Ie for any $\LTm \in \Lambda$, if $\Cbv{\LTm} \ToTot^* \TmTwo$ then 
$\TmTwo \in \Cbv{\BangSet\!}$);
moreover, $\Tot$-reduction on $\Cbv{\BangSet}$ is projected into $\Betav$-reduction on $\Lambda$ by the forgetful map $\InvV{(\cdot)}$.

%


\newcounter{cor:cbv-preservation-equivalence}
\addtocounter{cor:cbv-preservation-equivalence}{\value{definition}}
\begin{corollary}[Preservation of \CbV\ equational theory]
\label{cor:cbv-preservation-equivalence}
  Let $\LTm, \LTmTwo \in \Lambda$. 
  One has $\LTm \simeq_{\Betav} \LTmTwo$ iff $\Cbv{\LTm} \simeq_{\Tot} \Cbv{\LTmTwo}$.
\end{corollary}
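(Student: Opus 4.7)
The plan is to mimic the \CbN\ argument of \Refcorp{cbn-preservation}{equational-theory}, but using the forgetful map $\InvV{(\cdot)}$ of \Reflemma{forgetful} to compensate for the fact that $\Cbv{(\cdot)}$ is not surjective onto $\Cbv{\BangSet}$ and that $\Cbv{\LTm} \ToTot \Tm$ need not mean that $\Tm$ is the $\Cbv{(\cdot)}$-image of a $\lambda$-term (as witnessed by $\Cbv{(\La{\Var}{\Var\Var})} = \Bang{\Delta\!'} \ToBang \Bang{\Delta}$).

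For the direction ($\Rightarrow$), assume $\LTm \simeq_{\Betav} \LTmTwo$. I would simply spell out the definition of $\simeq_{\Betav}$ as a finite zig-zag $\LTm = \LTm_0 \leftrightarrow_{\Betav} \LTm_1 \leftrightarrow_{\Betav} \cdots \leftrightarrow_{\Betav} \LTm_n = \LTmTwo$ and apply \Refthmp{embedding}{cbv} (soundness) to each step: every $\LTm_i \ToBetav \LTm_{i+1}$ becomes $\Cbv{\LTm_i} \ToBang \!\cdot\! \ToVal \Cbv{\LTm_{i+1}}$, hence $\Cbv{\LTm_i} \ToTot^* \Cbv{\LTm_{i+1}}$. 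Composing these forward/backward $\Tot$-chains yields $\Cbv{\LTm} \simeq_{\Tot} \Cbv{\LTmTwo}$.

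For the direction ($\Leftarrow$), assume $\Cbv{\LTm} \simeq_{\Tot} \Cbv{\LTmTwo}$. Here I would use confluence of $\ToTot$ (\Refpropp{confluence}{ToTot}) to obtain $\Tm \in \BangSet$ with $\Cbv{\LTm} \ToTot^* \Tm \,{}_{\Tot}^*\!\!\leftarrow \Cbv{\LTmTwo}$. The key step is to lift \Reflemmap{forgetful}{reduction}, which is formulated for one step, to reduction sequences: by a straightforward induction on the length of $\ImTmOne \ToTot^* \Tm$ starting in $\ImTmOne \in \Cbv{\BangSet\!}$, one shows that $\Tm \in \Cbv{\BangSet\!}$ (so the induction continues) and $\InvV{\ImTmOne} \ToBetav^* \InvV{\Tm}$ (composing $\ToBetav^=$ steps). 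Applied to both reducts, this gives $\InvV{\Cbv{\LTm}} \ToBetav^* \InvV{\Tm} \,{}_{\Betav}^*\!\!\leftarrow \InvV{\Cbv{\LTmTwo}}$, which by \Reflemmap{forgetful}{inverse} (left-inverse property) is exactly $\LTm \ToBetav^* \InvV{\Tm} \,{}_{\Betav}^*\!\!\leftarrow \LTmTwo$, hence $\LTm \simeq_{\Betav} \LTmTwo$.

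The only non-bookkeeping point is the multi-step lifting of \Reflemmap{forgetful}{reduction}, which is routine because its conclusion ``$\Tm \in \Cbv{\BangSet\!}$'' keeps the induction hypothesis applicable; everything else is a direct combination of confluence of $\ToTot$, the soundness half of \Refthmp{embedding}{cbv}, and the two properties of $\InvV{(\cdot)}$ already established in \Reflemma{forgetful}.
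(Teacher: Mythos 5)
Your proposal is correct and follows essentially the same route as the paper: soundness of \Refthmp{embedding}{cbv} for the forward direction, and confluence of $\ToTot$ combined with \Reflemmap{forgetful}{reduction} (lifted to multi-step reduction) and \Reflemmap{forgetful}{inverse} for the converse. The only cosmetic difference is that in the forward direction you compose a zig-zag directly instead of first invoking confluence of $\ToBetav$ to obtain a common reduct, which changes nothing of substance.
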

\begin{proof}
	If $\LTm \simeq_{\Betav} \LTmTwo$ then $\LTm \ToBetav^* \LTmThree \, \MRevTo{\Betav\!} \LTmTwo$ for some $\LTmThree \in \Lambda$, as $\ToBetav$ is confluent;
	by \Refthmp{embedding}{cbv} (soundness), $\Cbv{\LTm} \ToTot^* \Cbv{\LTmThree} \MRevTo{\Tot} \Cbv{\LTmTwo}$ 
	and so $\Cbv{\LTm} \simeq_\Tot \Cbv{\LTmTwo}$.
	Conversely, if $\Cbv{\LTm} \simeq_{\Tot} \Cbv{\LTmTwo}$ then $\Cbv{\LTm} \ToTot^* \TmThree \,\MRevTo{\Tot} \Cbv{\LTmTwo}$ for some $\TmThree \in \BangSet$, since $\ToTot$ is confluent (\Refpropp{confluence}{ToTot});
	by \Refrmk{cbv-image}, $\Cbv{\LTm}\!, \Cbv{\LTmTwo} \in \Cbv{\BangSet\!}$;
	thus, $\TmThree \in \Cbv{\BangSet\!}$ and $\InvV{\Cbv{\LTm}} \ToBetav^* \InvV{\TmThree} \, \MRevTo{\Betav\!} \InvV{\Cbv{\LTmTwo}}$ by \Reflemmap{forgetful}{reduction}, hence $\LTm = \InvV{\Cbv{\LTm}} \simeq_{\Betav} \InvV{\Cbv{\LTmTwo}} = \LTmTwo$ by \Reflemmap{forgetful}{inverse}.
\end{proof}

So, \Refcor{cbv-preservation-equivalence} says that \CbV\ translation $\Cbv{(\cdot)}$\,---\,even if it is a sound but not complete embedding of $\Betav$-reduction into $\Tot$-reduction\,---\,is a sound and complete embedding of $\Betav$-equivalence into $\Tot$-equivalence.
Said differently, the non-completeness of the \CbV\ translation with respect to $\Tot$-reduction is just a syntactic detail, the \CbV\ translation (via $\Tot$-equivalence) 
equates exactly the same as $\Betav$-equivalence.

A final remark on the good rewriting properties of the bang calculus: the embeddings of \CbN\ and \CbV\ \lam-calculi into the bang calculus are \emph{finer} than the ones into the linear calculus \lincalc\ introduced in \cite{MaraistOderskyTurnerWadler99}. 
For instance, in \lincalc\ 
there is no fragment isomorphic to \CbN\ \lam-calculus;
also, the \CbV\ translation of the \lam-calculus into \lincalc\ is sound but not complete, and equates more than $\Betav$-equivalence (see \cite[Ex.17]{MaraistOderskyTurnerWadler99}).
Moreover, the bang calculus can be modularly extended with other reduction rules and/or syntactic constructs so that our \CbV\ translation embeds the extensions of \CbV\ $\lambda$-calculus studied in \cite{AccattoliGuerrieri16} into the corresponding extended version of the bang calculus, with results analogous to those presented here.

\section{\texorpdfstring{The bang calculus with respect to \CbN\ and \CbV\ $\lambda$-calculi, semantically}{The bang calculus with respect to \CbN\ and \CbV\ lambda-calculi, semantically}}
\label{sect:semantics}

The denotational models of the bang calculus we are interested in this paper are those induced by a denotational model of $\LL$. 
We recall the basic definitions and notations, see \cite{Mellies09,Ehrhard16,EhrhardG16} for more details.

\paragraph{Linear logic based denotational semantics of bang calculus.}
A denotational model of $\LL$ is given by:

\begin{itemize}
	\item A $*$-autonomous category $\cL$, namely a symmetric monoidal closed category $(\cL,\ITens,\One,\Leftu,\Rightu,\Assoc,\Sym)$ with a dualizing object $\Bot$. 
	We use $\Limpl XY$ for the linear exponential object, $\Evlin\in\cL(\Tens{(\Limpl XY)}{X},Y)$ for the evaluation morphism and $\Curlin$ for the linear currying map $\cL(\Tens ZX,Y)\to\cL(Z,\Limpl XY)$. 
	We use $\Orth X$ for the object $\Limpl X\Bot$ of $\cL$ (the \emph{linear  negation} of $X$). 
	This operation $\Orthp{\_}$ is a functor $\Op\cL\to\cL$.
	The category $\cL$ is cartesian with terminal object $\Top$, product $\IWith$, projections $\Proj i$ ($i \in \{1,2\}$). 
	By $*$-autonomy, $\cL$ is cocartesian with initial object $\Zero$, coproduct $\IPlus$ and injections $\Inj i$.
	
	\item 
	A functor $\Excl\_\colon \cL\to\cL$ which is:
	\begin{itemize}
		\item a comonad with counit $\Dercat X\in
		\cL(\Excl X,X)$ (\emph{dereliction}) and comultiplication $\Diggcat X\in 
		\cL(\Excl X,\Excl{\Excl X})$ (\emph{digging}), and 
		\item a strong symmetric monoidal functor---with Seely isos $\Seelyz \in \cL(1,\oc \Top)$ and $\Seelyt_{X,Y} \in \cL(\oc X \otimes \oc Y, \oc(X \with Y))$---from the symmetric monoidal category $(\cL,\IWith, \Top)$ to the symmetric
		monoidal category $(\cL,\ITens, 1)$, satisfying an additional coherence condition
		with respect to $\Diggcat{}$.
	\end{itemize} 
	\end{itemize}	

%
In order that $\cL$ is also a denotational model of the bang calculus we need a further assumption: 
\begin{equation}\label{eq:assumption}
\text{ 
	\emph{the unique morphism in $\cL(\Zero,\Top)$ 
	must be an iso} \ (to simplify, we assume just $\Zero = \Top$).}
\end{equation} 
From \eqref{eq:assumption} it follows that for any two objects $X$ and $Y$ there is a morphism $\Zerom XY \Defeq \mathsf{i}\Compl \mathsf{t}\in\cL(X,Y)$ where $\mathsf{t}$
is the unique morphism $X\to\Top$ and $\mathsf{i}$ is the unique morphism $\Zero\to
Y$. 
It turns out that this specific zero morphism satisfies the identities $f\, \Zerom XY =\Zerom XZ=\Zerom YZ g$ for all $f\in\cL(Y,Z)$ and $g\in\cL(X,Y)$ .
Assumption \eqref{eq:assumption} is satisfied by many models of $\LL$, like relational model \cite{BucciarelliEhrhard99}, finiteness spaces \cite{Ehrhard05}, Scott model \cite{Ehrhard12bis}, (hyper-)coherence \cite{Girard87,Ehrhard93} and probabilistic coherence spaces~\cite{DanosEhrhard08}, all models based on Indexed $\LL$~\cite{BucciarelliEhrhard99}.

A \emph{model} of the bang calculus is any object $\cU$ of $\cL$ satisfying the identity $\cU \cong \With{\Excl\cU}{(\Limpl{\Excl\cU}{\cU})}$ (we assume this iso to be an equality).
Note that this entails both $\oc\cU\retract \cU$ and $\Limpl{\oc\cU}{\cU}\retract \cU$.

Given a term $\Tm$ and a repetition-free list of variables $\Vect \Var=(\List \Var
1k)$ which contains all the free variables of $\Tm$, we can define a morphism
$\Psem{\Tm}_{\Vect \Var}\in\cL(\Tpower{(\Excl\cU)}k,\cU)$\,---\,the \emph{denotational 
semantics} (or \emph{interpretation}) of $\Tm$\,---\,where $(\oc \cU)^{\otimes k} \Defeq \bigotimes_{i=1}^k \oc\cU$. 
The definition is by induction on $\Tm \in \BangSet$:


\begin{itemize}
\item $\Psem{\Var_i}_{\seq \Var} \Defeq \Weakm{\cU}^{\otimes i-1}\otimes\ \Dercat{\cU}\otimes\Weakm{\cU}^{\otimes k-i}$ where $\Weakm{\cU}\in\cL(\Excl\cU,\One)$ is the weakening and we keep implicit the monoidality isos $\One\otimes \cU \simeq \cU$,

\item $\Psem{\La{\VarTwo}{\TmTwo}}_{\seq \Var} \Defeq \Pair{\Zerom{\Tpower{(\Excl\cU)}k}{\Excl\cU}}{\Curlin{(\Psem{\TmTwo}_{\Vect\Var,\VarTwo})}}$, where we assume without loss of generality 
$\VarTwo \notin \{\Var_1, \dots, \Var_k\}$,
    
\item $  \Psem{\App \TmTwo\TmThree}_{\Vect \Var} \Defeq \Evlin\Compl\Tensp{\Proj 2\Compl\Psem\TmTwo_{\Vect \Var}}{\Proj 1\Compl\Psem\TmThree_{\Vect \Var}}\Compl\Contr{},$ where $\Contr{} \in \cL(\Tpower{(\Excl\cU)}k, \Tens{\Tpower{(\Excl\cU)}k}{\Tpower{(\Excl\cU)}k})$ is the contraction,

\item $\Psem{\Bang \TmTwo}_{\Vect \Var} \Defeq \Pair{\Promp{\Psem \TmTwo_{\Vect \Var}}}{\Zerom{\Tpower{(\Excl\cU)}k}{\Limpl{\Excl\cU}{\cU}}}$, for $\Promp{\Psem \TmTwo_{\Vect \Var}}=\Excl{(\Psem S_{\Vect\Var})}\Compl\Coalgm{}$ where $\Coalgm{}\in\cL(\Tpower{(\Excl\cU)}k,\Excl{(\Tpower{(\Excl\cU)}k)})$ is the coalgebra structure map of $\Tpower{(\Excl\cU)}k$ (see \cite{Ehrhard16}),

\item $\Psem{\Der\TmTwo}_{\Vect\Var} \Defeq \Dercat{\cU}\Compl\Proj 1\Compl\Psem \TmTwo_{\Vect \Var}$\,.
\end{itemize}

\newcounter{thm:invariance}
\addtocounter{thm:invariance}{\value{definition}}
\begin{theorem}[Invariance, \cite{EhrhardG16}]
\label{thm:invariance}
  Let $\Tm, \TmTwo \in \BangSet$ and $\Vect{\Var}$
   be a repetition-free list of variables which contains all
  free variables of $\Tm$ and $\TmTwo$. If $\Tm 
  \simeq_\Tot \TmTwo$ then 
  $\Psem{\Tm}_{\Vect{\Var}} = \Psem{\TmTwo}_{\Vect{\Var}}$. 
\end{theorem}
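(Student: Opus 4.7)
Since $\simeq_\Tot$ is by definition the reflexive-transitive-symmetric closure of $\ToTot$, and equality of morphisms in $\cL$ is already reflexive, transitive and symmetric, it suffices to establish the one-step statement: if $\Tm \ToTot \TmTwo$ then $\Psem{\Tm}_{\Vect\Var} = \Psem{\TmTwo}_{\Vect\Var}$. Using the decomposition $\ToTot \,=\, \ToVal \cup \ToBang$, this splits into two cases depending on which root-step fires. My plan is to first prove two auxiliary lemmas (compositionality under contexts, and a substitution lemma), then to check the two root-steps by a direct categorical computation.

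The first auxiliary result I would prove is a \emph{contextuality lemma}: for every context $\Ctx \in \CtxSet$ and every pair of terms $\Tm, \TmTwo$ whose free variables are contained in $\Vect\Var$ (together with the variables captured by $\Ctx$) such that $\Psem{\Tm}_{\Vect\Var'} = \Psem{\TmTwo}_{\Vect\Var'}$, one has $\Psem{\Ctxp{\Tm}}_{\Vect\Var} = \Psem{\Ctxp{\TmTwo}}_{\Vect\Var}$. This is a straightforward induction on $\Ctx$: each clause of the inductive definition of $\Psem{\cdot}$ uses its sub-interpretations purely compositionally (via tupling, linear currying, $\Evlin$, $\Contr{}$, $\Prom{\_}$ and $\Dercat{\cU}$), so replacing a sub-interpretation by an equal one preserves the outcome. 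Thanks to this lemma, it is enough to prove invariance for the two \emph{root-steps} $\Root{\Derel}$ and $\Root{\ValScript}$.

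The second auxiliary result is a \emph{substitution lemma}: for $\Tm, \TmTwo \in \BangSet$ with $\Fv{\Tm} \subseteq \{\Vect\Var,\VarTwo\}$ and $\Fv{\TmTwo} \subseteq \Vect\Var$, one has
\[
  \Psem{\Tm\Sub{\TmTwo}{\VarTwo}}_{\Vect\Var} \;=\; \Psem{\Tm}_{\Vect\Var,\VarTwo} \Compl \bigl(\Id_{\Tpower{(\Excl\cU)}k} \otimes \Promp{\Psem{\TmTwo}_{\Vect\Var}}\bigr) \Compl \Contr{} .
\]
The proof is by induction on $\Tm$, crucially using that the promotion $\Prom{\_}$ obtained from the coalgebra structure $\Coalgm{}$ on $\Tpower{(\Excl\cU)}k$ satisfies the expected naturality and the Seely coherences with contraction and weakening. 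This lemma is the main obstacle, since it requires juggling the strong monoidal structure of $\Excl\_$, the commutation of $\Contr{}$ and $\Weakm{}$ with $\Coalgm{}$, and the comonad laws for $\Dercat{}$ and $\Diggcat{}$; all other steps are mechanical in comparison.

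With both lemmas in hand, the two root-cases are short categorical calculations. For $\Derp{\Bang{\Tm}} \Root{\Derel} \Tm$: unfolding the definition gives $\Psem{\Der{\Bang{\Tm}}}_{\Vect\Var} = \Dercat\cU \Compl \Proj 1 \Compl \Pair{\Promp{\Psem\Tm_{\Vect\Var}}}{\Zerom{}{}} = \Dercat\cU \Compl \Promp{\Psem\Tm_{\Vect\Var}}$, which equals $\Psem{\Tm}_{\Vect\Var}$ by the comonad counit law $\Dercat\cU \Compl \Promp{f} = f$ (i.e.\ $\Dercat\cU \Compl \Excl{f} \Compl \Coalgm{} = f$, the standard Kleisli identity). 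For $\App{\La{\VarTwo}{\Tm}}{\Bang{\TmThree}} \Root{\ValScript} \Tm\Sub{\TmThree}{\VarTwo}$: unfolding yields $\Psem{\App{\La{\VarTwo}{\Tm}}{\Bang{\TmThree}}}_{\Vect\Var} = \Evlin \Compl (\Curlin\Psem{\Tm}_{\Vect\Var,\VarTwo} \otimes \Promp{\Psem{\TmThree}_{\Vect\Var}}) \Compl \Contr{}$, which by the linear currying/evaluation adjunction ($\Evlin \Compl (\Curlin f \otimes g) = f \Compl (\Id \otimes g)$) equals $\Psem{\Tm}_{\Vect\Var,\VarTwo} \Compl (\Id \otimes \Promp{\Psem{\TmThree}_{\Vect\Var}}) \Compl \Contr{}$, which is $\Psem{\Tm\Sub{\TmThree}{\VarTwo}}_{\Vect\Var}$ by the substitution lemma. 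In both root-cases the auxiliary projection-of-a-pair and zero-morphism computations go through because of the assumption $\Zero = \Top$ recalled in \eqref{eq:assumption}. Combining these two root-case verifications with the contextuality lemma closes the proof.
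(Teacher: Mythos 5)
Your plan is correct and follows essentially the same route as the proof the paper relies on (the theorem is imported from \cite{EhrhardG16}, and the paper's only comment on its proof is that it crucially uses that $\Psem{\Bang{\TmThree}}_{\Vect{\Var}}$ is a coalgebra morphism): reduce to the two root-steps via contextuality, prove a substitution lemma whose box case is exactly where the coalgebra-morphism/Seely coherences of promotion with contraction and weakening are needed, and discharge the $\Derel$- and $\ValScript$-root-steps by the comonad counit law and the currying adjunction. Your identification of the substitution lemma's promotion case as the main obstacle matches the paper's own remark, so no further comparison is needed.
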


The proof of \Refthm{invariance} uses crucially
the fact that $\Psem{\Bang{\TmThree}}_{\Vect{\Var}}$ is a coalgebra morphism, see \cite{Ehrhard16}.

The general notion of denotational model for the bang calculus presented here and obtained from \emph{any} denotational model $\cL$ of $\LL$ satisfying the assumption \eqref{eq:assumption} 
above is a particular case of Moggi's semantics of computations based on monads \cite{Moggi89,Moggi91}, 
if one keeps in mind that the functor ``$\oc$'' defines a strong monad on the Kleisli category $\cL_\oc$ of $\cL$.

\paragraph{Call-by-name.} 
A model of the \CbN\ \lam-calculus is a reflexive object in a cartesian closed category. The category $\cL$ being $*$-autonomous, its Kleisli $\Kl{\cL}$ over the comonad $(\oc,\Diggcat{},\Dercat{})$ is cartesian closed.
The category $\Kl{\cL}$ (whose objects are the same as $\cL$ and morphisms are given by $\Kl{\cL}(A,B) \Defeq \cL(\oc A,B)$) has composition 
$f \Klc g \Defeq f\,\oc g\,\Diggcat{}$ and identities 
$A \Defeq \Dercat{A}$.
In $\Kl{\cL}$, products $A \with B$ are preserved, 
with projections $\pi_i \Compl \Defeq \Proj{i}\Dercat{\oc(A \with B)}$ ($i \in \{1,2\}$); 
the exponential object $\Impl{A}{B}$ is $\Limpl{\oc A}{B}$ (this is the semantic counterpart of Girard's \CbN\ translation) and has an evaluation morphism $\Ev \Defeq \Evlin (\Dercat{\Limpl{\oc A}{B}}\otimes\Id_{\oc A}) (\Seelyt)^{-1} \in \cL(\oc((\Limpl{\oc A}{B}) \with \oc A),B)$.
This defines an exponentiation since for all $f \in\cL(\oc(C \with A), B) 
$ there is a unique morphism \mbox{$\Cur f \Defeq \Curlin(f\, \Seelyt) \in \cL(\oc C, \Limpl{\oc A}{B})$ satisfying $\Ev\Klc \Pair{\Cur{f}}{A} = f$.}

The identity $\cU = \With{\Excl\cU}{(\Limpl{\Excl\cU}{\cU})}$ satisfied by our object $\cU$ (the model of the bang calculus) entails $\Limpl{\Excl\cU}{\cU}\retract \cU$ in $\cL$ via $\Lam{} \Defeq \Pair{\Zerom{\Limpl{\Excl\cU}{\cU}}{\oc\cU}}{\Id_{\Limpl{\Excl\cU}{\cU}}} \in \cL(\Limpl{\oc\cU}{\cU},\cU)$ and $\app{} \Defeq \Proj 2 \in \cL(\cU, \Limpl{\oc\cU}{\cU})$, 
since $\app{} \Compl\, \Lam{} = \Id_{\Limpl{\oc\cU}{\cU}}$.
So, $\cU$ is a reflexive object (\Ie~$\Limpl{\Excl\cU}{\cU}\retract \cU$) in $\Kl\cL$ via $\app{n} \Defeq \Dercat{\Limpl{\oc\cU}{\cU}}\oc\app{} \in \Kl{\cL}(\cU, \Limpl{\oc\cU}{\cU})
$ and $\Lam{n} \Defeq \Dercat{\cU}\oc\Lam{} \in \Kl{\cL}(\Limpl{\oc\cU}{\cU}, \cU)
$, since $\app{n} \Klc \Lam n = \Limpl{\oc\cU}{\cU}$.
Then, the interpretation of a \lam-term $\LTm$ can be defined, as usual, as a morphism $\Intn{\LTm}_{\seq{\Var}} \in\Kl\cL(\cU^{k},\cU)$, with $\seq{\Var} = (\Var_1, \dots, \Var_k)$ such that $\Fv{\LTm} \subseteq \{\Var_1, \dots, \Var_k\}$ and $\Var_i \neq \Var_j$:
\begin{align*}
  \Intn{\Var_i}_{\seq \Var} &\Defeq 
  \pi_i^k, &
  \Intn{\La{\!\VarTwo\,}{\LTm}}_{\seq \Var} &\Defeq \Lam{n} \Klc \Cur{\Intn{\LTm}_{\seq{\Var},\VarTwo}}, &
  \Intn{\LTm\LTmTwo}_{\seq \Var} &\Defeq \Ev \Klc \Pair{\app{n} \Klc \Intn{\LTm}_{\seq \Var}}{\Intn{\LTmTwo}_{\seq \Var}}.
\end{align*}

Summing up, the object $\cU$ provides both a model of the bang calculus and a model of the \CbN\ \lam-calculus. 
The relation between the two is elegant:
the semantics $\Intn{\LTm}$ in the \CbN\ model of the \lam-calculus of a $\lambda$-term $\LTm$ decomposes into the semantics $\Psem{\cdot}$ in the model of the bang calculus of \mbox{the \CbN\ translation $\Cbn{\LTm}$ of $\LTm$.}

\newcounter{thm:semantics-cbn}
\addtocounter{thm:semantics-cbn}{\value{definition}}
\begin{theorem}[Factorization of any \CbN\ semantics]
\label{thm:semantics-cbn}
  For every \lam-term $\LTm$ and every repetition-free list $\seq{\Var} = (\Var_1, \dots, \Var_k)$ of variables such that $\Fv{\LTm} \subseteq \{\Var_1, \dots, \Var_k\}$, one has $\Psem{\Cbn \LTm}_{\vec \Var} 
  = \Intn{\LTm}_{\Vect \Var}$ (up to Seely's isos
  ).
\end{theorem}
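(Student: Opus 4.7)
I would argue by structural induction on $\LTm \in \Lambda$. First I would make the ``up to Seely's isos'' clause precise: $\Psem{\Cbn{\LTm}}_{\vec{\Var}} \in \cL(\Tpower{(\Excl\cU)}{k}, \cU)$ whereas $\Intn{\LTm}_{\vec{\Var}} \in \Kl{\cL}(\cU^k, \cU) = \cL(\Excl{(\cU^k)}, \cU)$, so the claim I actually prove is
\[ \Psem{\Cbn{\LTm}}_{\vec{\Var}} \;=\; \Intn{\LTm}_{\vec{\Var}} \Compl \Seelyt_{k}, \]
where $\Seelyt_k \in \cL(\Tpower{(\Excl\cU)}{k}, \Excl{(\cU^k)})$ is the iterated Seely iso (with base case $\Seelyz$ for $k=0$). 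All three inductive cases then reduce to coherence identities available from the definition of $\CBN$ semantics recalled just before the theorem.

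\emph{Variable case.} For $\LTm = \Var_i$, the bang-calculus side is an $i$-th dereliction surrounded by weakenings, while $\Intn{\Var_i}_{\vec{\Var}} = \pi_i^k = \Proj{i}\Dercat{\cU^k}$. The equality $\Weakm{\cU}^{\otimes i-1}\otimes \Dercat{\cU}\otimes \Weakm{\cU}^{\otimes k-i} = \Proj{i}\Dercat{\cU^k}\Compl \Seelyt_k$ is the standard coherence between the exponential comonoid structure (induced by the Seely isos $\Seelyt$ and $\Seelyz$) and the cartesian projections/counit.

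\emph{Abstraction case.} For $\LTm = \La{\VarTwo}{\TmTwo}$, applying the IH to the body and unfolding, one reduces to showing that $\Pair{\Zerom{}{}}{\Curlin(\Intn{\TmTwo}_{\vec\Var,\VarTwo}\Compl\Seelyt_{k+1})}$ equals $\Lam{n}\Klc \Cur{\Intn{\TmTwo}_{\vec\Var,\VarTwo}}\Compl \Seelyt_k$. Expanding $\Lam{n} = \Dercat{\cU}\oc\Lam{}$, $\Cur{f} = \Curlin(f\Compl\Seelyt)$ and $\Lam{} = \Pair{\Zerom{}{}}{\Id}$, the comonad equation for $\Dercat{}$ together with the fact that pairing commutes with precomposition (and that $\Zerom{}{}$ is absorbed into any precomposition) yields the equality on the nose.

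\emph{Application case.} This is the core of the proof. Applying IH to both subterms, one needs to identify
\[ \Evlin\Compl\bigl(\Proj{2}\Compl\Intn{\LTm}_{\vec{\Var}}\Compl\Seelyt_k \,\otimes\, \Proj{1}\Compl\Promp{\Intn{\LTmTwo}_{\vec{\Var}}\Compl\Seelyt_k}\bigr)\Compl\Contr{} \]
with $\bigl(\Ev \Klc \Pair{\app{n}\Klc\Intn{\LTm}_{\vec{\Var}}}{\Intn{\LTmTwo}_{\vec{\Var}}}\bigr)\Compl\Seelyt_k$. Unfolding $\Ev = \Evlin(\Dercat{}\otimes\Id)(\Seelyt)^{-1}$, $\app{n} = \Dercat{}\oc\Proj{2}$, $\Lam{} = \Pair{\Zerom{}{}}{\Id}$, $\app{} = \Proj{2}$, and the Kleisli composition $f\Klc g = f\Compl\oc g\Compl\Diggcat{}$, everything reduces to the coherence lemma
\[ \Pair{f}{g}\Compl\Seelyt_k \;=\; \Seelyt\Compl\bigl(f \otimes \Promp{g\Compl\Seelyt_k}\bigr)\Compl\Contr{} \]
(up to a suitable adjustment for the Kleisli composition on the left component), which expresses that the Kleisli pairing with a morphism whose codomain is $\oc$-coalgebraic corresponds, at the $\cL$-level, to a tensor with a promotion preceded by the contraction $\Contr{}$ and followed by the Seely iso.

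\textbf{Main obstacle.} The bookkeeping in the application case: the identification of the Kleisli-cartesian pairing $\Pair{-}{-}$ (which implicitly uses the Seely iso $\Seelyt$ and the coalgebra map $\Coalgm{}$ to promote the right component) with the purely multiplicative pattern $\Tens{-}{\Prom{-}}\Compl\Contr{}$ used in the bang-calculus interpretation of $\App{\Tm}{\Bang{\TmTwo}}$. No new idea is required beyond the Seely axioms and the $\oc$-coalgebra coherences, but the diagram chase must be executed carefully.
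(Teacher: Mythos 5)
Your proposal is correct and follows essentially the same route as the paper's proof: a structural induction on $\LTm$ with the ``up to Seely's isos'' clause made explicit as post-composition with the iterated iso $\Seelyt_k$, the variable and abstraction cases handled by the comonoid/counit coherences, and the application case reduced to the identification of the Kleisli pairing $\Pair{-}{-}$ with the pattern $\Evlin\Compl(\Tens{-}{\Prom{-}})\Compl\Contr{}$ via $\Seelyt$, $\Diggcat{}$ and the coalgebra structure. This is exactly the diagram chase the paper carries out, so no further comparison is needed.
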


\begin{proof} 
	Below we use $\cong$ to transform a morphism $f\in\cL((\oc \cU)^{\otimes k},\cU)$ into a morphism $g\in\Kl\cL(\cU^{k},\cU)$ using Seely's isos (where $\cU^{k} \Defeq \bigwith_{i=1}^k \cU$).
	We proceed by 
	induction on $\LTm \in \Lambda$.
	
		If $\LTm$ is a variable,  then $\LTm = \Var_i = \Cbn{\LTm}$, so $\Psem{\Cbn{\LTm}}_{\seq \Var} = \Psem{\Var_i}_{\seq \Var} = \Weakm{\cU}^{\otimes i-1}\otimes\ \Dercat{\cU}\otimes\Weakm{\cU}^{\otimes k-i} \cong \Proj{i}^k \Dercat{\cU^k} = \Intn{\Var_i}_{\seq \Var}$. 

		If $\LTm \Defeq \La{\VarTwo}{\LTmTwo}$ then $\Cbn{\LTm} = \La{\VarTwo}{\Cbn{\LTmTwo}}$.
		Hence,
		\begin{align*}
		\Psem{\La{\VarTwo}{\Cbn{\LTmTwo}}}_{\seq x} &= \Pair{\Zerom{\Tpower{(\Excl\cU)}k}{\Excl\cU}}{\Curlin{(\Psem{\Cbn{\LTmTwo}}_{\Vect\Var,\VarTwo})}} \ 
		\cong \ \Pair{\Zerom{(\Excl\cU)^{\otimes k}}{\oc\cU}}{\Curlin(\Intn{\LTmTwo}_{\Vect \Var,\VarTwo}\Seelyt)}\\
		&= \Dercat{\cU} \oc (\Pair{\Zerom{(\Excl\cU)^{\otimes k}}{\oc\cU}}{\Curlin(\Intn{\LTmTwo}_{\Vect \Var,\VarTwo}\Seelyt)})\Diggcat{\cU^k}\\
		&= \Dercat{\cU} \oc \Pair{\Zerom{\Limpl{\Excl\cU}{\cU}}{\oc\cU}}{\Id_{\Limpl{\Excl\cU}{\cU}}}\,\oc{\Curlin(\Intn{\LTmTwo}_{\Vect \Var,\VarTwo}\Seelyt)}\Diggcat{\cU^k}\\
		&= (\Dercat{\cU} \oc{\Lam{}})\Klc\Cur{\Intn{\LTmTwo}_{\Vect \Var,\VarTwo}}= \Intn{\La{\VarTwo}{\LTmTwo}}_{\Vect \Var} \ .
		\end{align*}
		
		If $\LTm \Defeq \LTmTwo\LTmThree$ then $\Cbn{\LTm} = \App{\Cbn{\LTmTwo}}{\Bang{(\Cbn \LTmThree)}}$.
		Therefore,
		\begin{align*}
		\Psem{\App{\Cbn{\LTmTwo}}{\Bang{(\Cbn \LTmThree)}}}_{\seq x} 
		&= \Evlin ((\Proj 2\, \Psem{\Cbn \LTmTwo}_{\Vect \Var})\otimes(\Proj 1\, \Psem{\Bang{(\Cbn \LTmThree)}}_{\Vect \Var}))\Contr{} \ 
		= \Evlin ((\Proj 2\, \Psem{\Cbn \LTmTwo}_{\Vect \Var})\otimes(\Proj 1\, \Pair{\Promp{\Psem {\Cbn \LTmThree}_{\Vect \Var}}}{\Zerom{\Tpower{(\Excl\cU)}k}{\Limpl{\Excl\cU}{\cU}}}))\Contr{}\\
		&= \Evlin ((\Proj 2\, \Psem{\Cbn \LTmTwo}_{\Vect \Var})\otimes\Promp{\Psem {\Cbn \LTmThree}_{\Vect \Var}})\Contr{} \ 
		\cong \ \Evlin (
		(\app{} \Compl \Intn{\LTmTwo}_{\Vect \Var})
		\otimes
		(\oc\Intn{\LTmThree}_{\Vect \Var}\,\Diggcat{\cU^k})
		)\Contr{}\\
		&= \Evlin (\Dercat{\Limpl{\oc \cU}{\cU}}\oc(\app{} \Compl \Intn{\LTmTwo}_{\Vect \Var})\Diggcat{\cU^k})
		\otimes(\oc\Intn{\LTmThree}_{\Vect \Var}\,\Diggcat{\cU^k}))\Contr{}\\
		&= \Evlin (\Dercat{\Limpl{\oc \cU}{\cU}}\otimes\Id_{\oc \cU})\,(\oc(\app{} \Compl \Intn{\LTmTwo}_{\Vect \Var})\otimes\oc\Intn{\LTmThree}_{\Vect \Var})(\Diggcat{\cU^k}\otimes\Diggcat{\cU^k})\Contr{}\\
		&= \Evlin (\Dercat{\Limpl{\oc \cU}{\cU}}\otimes\Id_{\oc \cU})\Seelyt\oc\Pair{\app{} \Compl \Intn{\LTmTwo}_{\Vect \Var}}{\Intn{\LTmThree}_{\Vect \Var} }\Diggcat{\cU^k}\\
		&= \Ev\Klc\Pair{\Dercat{\Limpl{\oc\cU}{\cU}}\,\oc(\app{} \Compl\,\Intn{\LTmTwo}_{\Vect \Var})\Diggcat{\cU^k} }{\Intn{\LTmThree}_{\Vect \Var} } \ 
		= \ \Ev\Klc\Pair{\Dercat{\Limpl{\oc\cU}{\cU}}\,\oc{\app{}} \Compl \oc\Intn{\LTmTwo}_{\Vect \Var}\Diggcat{\cU^k} }{\Intn{\LTmThree}_{\Vect \Var} } \\
		&= \Ev\Klc\Pair{(\Dercat{\Limpl{\oc\cU}{\cU}}\,\oc{\app{}})\Klc\Intn{\LTmTwo}_{\Vect \Var} }{\Intn{\LTmThree}_{\Vect \Var} } = \Intn{\LTmTwo\LTmThree}_{\Vect \Var} \ .
		\hfill\qedhere
		\end{align*}		
\end{proof}

\Refthm{semantics-cbn} is a powerful result: it says not only that \emph{every} $\LL$ based model of the bang calculus is also a model of the \CbN\ $\lambda$-calculus, but also that the \CbN\ semantics of any $\lambda$-term in such a model \emph{always} naturally factors into the \CbN\ translation of the $\lambda$-term and its semantics in the bang calculus.

\paragraph{Call-by-value.} 
Following~\cite{PravatoRonchiRoversi99,Ehrhard12}, models of the \CbV\ \lam-calculus can be defined using 
Girard's ``boring'' \CbV\ translation of the intuitionistic 
implication into $\LL$.
It is enough to find an object $X$ in $\cL$ satisfying $\Limpl{\oc X}{\oc X}\retract X$ (or equivalently, $\oc(\Limpl{X}{X})\retract X$).\footnotemark
\footnotetext{This approach is compatible with other notions of model such as Moggi's one \cite{Moggi89,Moggi91}, 
since the functor ``$\oc$'' defines a strong monad on the Kleisli category $\Kl\cL$.
The reflexive object $\Limpl{\oc X}{\oc X}\retract X$ (or equivalently, $\oc(\Limpl{X}{X})\retract X$) is the \CbV\ version in $\cL$ of the reflexive object $X \Rightarrow X \retract X$ in a cartesian closed category, in accordance with Girard's \CbV\ decomposition of the arrow.}	
This is the case for our object $\cU$ (the model of the bang calculus) since $\Limpl{\oc\cU}{\cU}\retract \cU$ and $\oc\cU \retract \cU$ entail $\Limpl{\oc\cU}{\oc\cU}\retract\cU$ (by the variance of $\oc\cU\multimap\_$) via the morphisms $\Lam v = \Pair{\Zerom{\Limpl{\oc\cU}{\oc\cU}}{\oc\cU}}{\Curlin(\Pair{\Evlin}{\Zerom{(\Limpl{\oc\cU}{\oc\cU})\otimes\oc\cU}{\cU}})} \in \cL(\Limpl{\oc\cU}{\oc\cU}, \cU)$ and $\app v =\Curlin(\Proj 1\,\Evlin)\Proj 2 \in \cL(\cU, \Limpl{\oc\cU}{\oc\cU})$.
As in \cite{Ehrhard12}, we can then define the interpretation of a \lam-term $\LTm$ as a morphism $\Intv{\LTm}_{\seq \Var} \in\cL((\oc\cU)^{\otimes k},\oc\cU)$, with $\seq{\Var} = (\Var_1, \dots, \Var_k)$ such that $\Fv{\LTm} \subseteq \{\Var_1, \dots, \Var_k\}$ and $\Var_i \neq \Var_j$:
\begin{align*}
  \Intv{\Var_i}_{\seq \Var} &= \Weakm{\cU}^{\otimes i-1}\otimes\ \Id_{\oc\cU}\otimes\Weakm{\cU}^{\otimes k-i}, &
  \Intv{\La{\!\VarTwo\,}{\LTm}}_{\seq \Var} &= \Bang{(\Lam v\Compl\Curlin(\Intv{\LTm}_{\seq{\Var},\VarTwo}))}, &
  \Intv{\LTm\LTmTwo}_{\seq \Var} &= \Evlin ((\app v\Compl \Intv{\LTm}_{\seq \Var})\otimes(\Intv{\LTmTwo}_{\seq \Var}))\Contr{}.
\end{align*}
We now have two possible ways of interpreting the \CbV\ \lam-calculus in our model $\cU$: either by translating a \lam-term $\LTm$ into 
$\Cbv{\LTm} \in \BangSet$ and then compute $\Psem{\Cbv \LTm}$, or by computing directly~$\Intv{\LTm}$.
It is natural to wonder whether the two interpretations $\Psem{\Cbv \LTm}$ and $\Intv{\LTm}$ are related, and in what way. In~\cite{EhrhardG16} the authors conjectured that, at least in the case of a particular relational model $\cU$ satisfying $\cU = \oc\cU\cup (\oc{\cU}\times \cU) = \oc \cU\with (\oc\cU\multimap \cU)$, the two interpretations coincide.
We show that the situation is actually more complicated than expected.

The \emph{relational model} $\cU$ introduced in~\cite{EhrhardG16} admits the following concrete description as a type system.
The set $\cU$ of \emph{types} and the set $\oc \cU$ of \emph{finite multisets} over $\cU$ are 
defined by mutual induction as follows:
\begin{align}
\label{eq:types}
	\textup{(set: }\cU\textup{)}\quad \alpha,\beta,\gamma &\Coloneqq a\mid a\lto\alpha\qquad \qquad& \textup{(set: }\oc \cU\textup{)}\quad
	a,b,c &\Coloneqq [\alpha_1,\dots,\alpha_k]\quad\textup{ for any } k \ge 0.
\end{align}
\emph{Environments} $\Gamma$ are functions from variables to $\oc \cU$ 
such that $\Supp{\Gamma} \Defeq \{ \Var \in \VarSet \mid \Gamma(x)\neq \Emptymset\}$ is finite.
We write $\Var_1 \!:a_1,\dots, \Var_k \!:a_k$ for the environment $\Gamma$ satisfying $\Gamma(\Var_i) = a_i$ and $\Gamma(\VarTwo) = \Emptymset$ for $y\notin\{\Var_1, \dots, \Var_k\}$.
The multiset union $a + b$ is extended to environments pointwise, namely $(\Gamma+\Delta)(\Var) \Defeq \Gamma(\Var)+\Delta(\Var)$.

On the one hand (see \cite{Ehrhard12,AccattoliGuerrieri18}), 
the relational model $\cU$ for the \CbV\ \lam-calculus interprets a \lam-term $\LTm$ using $\Intv{\cdot}$, which gives 
$\Intv{\LTm}_{\seq \Var} = \{ (a_1,\dots, a_k,\beta) \mid \Var_1 \!:\! a_1,\dots, \Var_k \!:\! a_k \Vdashv \LTm \!:\! \beta \text{ is derivable} \}$ where $\seq{\var} = (\Var_1,\dots, \Var_k)$ with  $\Fv{\LTm} \subseteq \{\Var_1, \dots, \Var_k\}$, and $\Vdashv$ is the type system below 
(note that if $\Gamma \Vdashv \LTm : \beta$ is derivable then $\beta \in \oc \cU$):
\begin{equation*}
  \begin{prooftree}[rule margin = 0.4ex, center=false, label separation = 0.4em]
	\infer0[\footnotesize{ax}]{\Var : a \Vdashv \Var : a}
	\end{prooftree}
	\qquad
	\begin{prooftree}[rule margin = 0.4ex, center=false, label separation = 0.4em]
	\hypo{\Gamma\Vdashv \LTm : [a\!\lto\! b] }
	\hypo{\Delta\Vdashv \LTmTwo : a }    
	\infer2[\footnotesize{$\mathrm{app}$}]{\Gamma + \Delta\Vdashv \LTm\LTmTwo : b}
	\end{prooftree}
	\qquad  
	\begin{prooftree}[rule margin = 0.4ex, center=false, label separation = 0.4em]
	\hypo{(\Gamma_i, \VarTwo \!:a_i \Vdashv \LTm : b_i)_{1 \leq i \leq k} \quad k \ge 0}
	\infer1[\footnotesize{$\mathrm{lam}$}]{\sum_{i=1}^k \Gamma_i \Vdashv \La{\!\VarTwo\,}{\LTm} \!: [a_1\!\lto\! b_1,\dots,a_k\!\lto\! b_k]}
	\end{prooftree}.
\end{equation*}

On the other hand (see \cite{EhrhardG16}), 
the relational model $\cU$ for the bang calculus interprets a term $\Tm \in \BangSet$ using $\Psem{\cdot}$, which gives 
$\Psem{\Tm}_{\seq \Var} = \{ (a_1,\dots, a_k, \beta) \mid \Var_1 \!:\! a_1,\dots, \Var_k \!:\! a_k \vdash_\oc \Tm \!:\! \beta \text{ is derivable} \}$ where  $\seq{\var} = (\Var_1,\dots, \Var_k)$ with  $\Fv{\LTm} \subseteq \{\Var_1, \dots, \Var_k\}$, and $\vdash_\oc$ is the following type system:
$$
\begin{prooftree}[label separation = 0.2em, center=false]
\infer0[\footnotesize{$ax_\oc$}]{\Var \!: [\alpha] \vdash_\oc \Var \!: \alpha}
\end{prooftree}
\qquad
\begin{prooftree}[label separation = 0.2em, rule margin = 0.4ex, center=false]
\hypo{\Gamma\vdash_\oc \Tm \!: a\!\lto\! \beta }
\hypo{\Delta\vdash_\oc \TmTwo \!: a }    
\infer2[\footnotesize{$\mathrm{@}$}]{\Gamma + \Delta\vdash_\oc \App{\Tm}{\TmTwo} \!: \beta}
\end{prooftree}
\qquad  
\begin{prooftree}[label separation = 0.2em, rule margin = 0.4ex, center=false]
\hypo{(\Gamma_i \vdash_\oc \Tm \!: \beta_i)_{1 \leq i \leq k} \quad k \ge 0}
\infer1[\footnotesize{$\oc$}]{\sum_{i=1}^k \Gamma_i \vdash_\oc \Bang{\Tm} \!: [\beta_1,\dots, \beta_k]}
\end{prooftree}
$$

\vspace{-.8\baselineskip}
$$
\begin{prooftree}[label separation = 0.2em, rule margin = 0.4ex, center=false]
\hypo{\Gamma \vdash_\oc \Tm \!: [\alpha]}
\infer1[\footnotesize{$\Der$}]{\Gamma  \vdash_\oc \Der{\Tm} \!: \alpha}
\end{prooftree}
\qquad
\begin{prooftree}[label separation = 0.2em, rule margin = 0.4ex, center=false]
\hypo{\Gamma, \Var \colon\! a \vdash_\oc \Tm \!: \beta}
\infer1[\footnotesize{$\lambda$}]{\Gamma \vdash_\oc \La{\Var}{\Tm} \!: a \lto \beta}
\end{prooftree}.
$$

In $\cU$ (seen as the relational model for the bang calculus) 
what is the interpretation 
$\Psem{\Cbv{\LTm}}_{\seq \Var}$ of the \CbV\ translation $\Cbv{\LTm}$  
of a $\lambda$-term $\LTm$?
Easy calculations show that in the type system $\vdash_\oc$ the rules below\,---\,
the ones needed to interpret terms of the form $\Cbv{\LTm}$ for some $\lambda$-term $\LTm$\,---\,can be derived: 
\begin{equation}
\label{eq:inference-cbv}
  \begin{prooftree}[label separation = 0.2em]
    \infer0[\footnotesize{ax}]{\Var \!: a \vdash_\oc \Bang{\Var} \!: a}
  \end{prooftree}
  \quad \
  \begin{prooftree}[label separation = 0.2em, rule margin = 0.4ex, separation = 1.45em]
    \hypo{\Gamma\vdash_\oc \Cbv{\LTm} \!: [a\!\lto\! \beta] }
    \hypo{\Delta\vdash_\oc \Cbv{\LTmTwo} \!: a }    
    \infer2[\footnotesize{$\mathrm{app}$}]{\Gamma + \Delta\vdash_\oc \App{\Der{\,\Cbv{\LTm}}}{\Cbv{\LTmTwo}} \!: \beta}
  \end{prooftree}
  \quad \
  \begin{prooftree}[label separation = 0.2em, rule margin = 0.4ex	]
    \hypo{(\Gamma_i, \VarTwo \!:a_i \vdash_\oc \Cbv{\LTm} \!: \beta_i)_{1 \leq i \leq k} \quad k \ge 0}
    \infer1[\footnotesize{$\mathrm{lam}$}]{\sum_{i=1}^k \Gamma_i \vdash_\oc \Bang{(\La{\!\VarTwo\,}{\Cbv{\LTm}})} \!: [a_1\!\lto\!\beta_1,\dots,a_k\!\lto\!\beta_k]}
  \end{prooftree}
\end{equation}

\noindent Intuitively, the type system $\Vdashv$ is obtained from the restriction of $\vdash_\oc$ 
to the image of $\Cbv{(\cdot)}$ by substituting arbitrary types $\beta$ with multisets $b$ of types. 
So, given a $\lam$-term $\LTm$, the two interpretations $\Intv{\LTm}_{\seq \Var}$ and $\Psem{\Cbv{\LTm}}_{\seq \Var}$ can be different: 
for $\alpha \in \cU \smallsetminus \oc \cU$ (\Eg take $\alpha = \emptymset \lto \emptymset$), one has $[([a \!\lto\! \alpha] + a)\lto \alpha]\in\Psem{\Cbv{(\La{\Var}{\Var\Var})}} \smallsetminus \Intv{\La{\Var}{\Var\Var}}$.

\begin{proposition}[Relational semantics for \CbV]
In the relational model $\cU$, $\Intv{\LTm}_{\seq \Var} \subseteq \Psem{\Cbv{\LTm}}_{\seq \Var}$ for any $\lambda$-term $\LTm$, with $\seq{\Var} = (\Var_1, \dots, \Var_k)$ such that $\Fv{\LTm} \subseteq \{\Var_1, \dots, \Var_k\}$.
There exists a closed $\lambda$-term $\LTmTwo$ such that $\Intv{\LTmTwo} \neq \Psem{\Cbv {\LTmTwo}}$.
\end{proposition}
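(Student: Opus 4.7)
\medskip

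The plan is to handle the two claims separately, exploiting the fact that both $\Intv{\cdot}$ and $\Psem{\cdot}$ admit the concrete type-theoretic descriptions recalled in the paragraph preceding the proposition.

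For the inclusion $\Intv{\LTm}_{\seq \Var} \subseteq \Psem{\Cbv{\LTm}}_{\seq \Var}$, I would simply compare the rules of $\Vdashv$ with the derived rules for $\vdash_\oc$ on terms of the form $\Cbv{(\cdot)}$ exhibited in \Refeq{inference-cbv}. The three $\Vdashv$-rules ($\mathrm{ax}$, $\mathrm{app}$, $\mathrm{lam}$) match one-to-one the three derived rules, with the only difference that in $\vdash_\oc$ the right-hand type in the $\mathrm{app}$-rule may be an arbitrary $\beta \in \cU$ whereas in $\Vdashv$ it is forced to be a multiset $b \in \oc\cU$. Hence a routine induction on a derivation $\Var_1 \!:\! a_1, \dots, \Var_k \!:\! a_k \Vdashv \LTm : \beta$ yields a $\vdash_\oc$-derivation of $\Var_1 \!:\! a_1, \dots, \Var_k \!:\! a_k \vdash_\oc \Cbv{\LTm} : \beta$ obtained by replacing each $\Vdashv$-rule by its derived $\vdash_\oc$-counterpart, which gives the set-theoretic inclusion.

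For the strict separation, I would make the counterexample already sketched in the paragraph above the proposition precise. Let $\LTmTwo \Defeq \La{\Var}{\Var\Var}$ and pick some $\alpha \in \cU \smallsetminus \oc\cU$, say $\alpha \Defeq \emptymset \lto \emptymset$. Using the $ax_\oc$, $\oc$, $\Der$, $@$ and $\lambda$ rules of $\vdash_\oc$, I would build a derivation showing $(\,) \vdash_\oc \Cbv{\LTmTwo} : [\,([a\!\lto\! \alpha]+a)\lto\alpha\,]$: the key step is typing the subterm $\App{\Der{\Bang{\Var}}}{\Bang{\Var}}$ with the non-multiset type $\alpha$, which is possible because the $@$-rule of $\vdash_\oc$ admits any $\beta \in \cU$ on the right. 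Conversely, I would argue that this type is not in $\Intv{\LTmTwo}$: inspection of $\Vdashv$ shows that every derivable judgement $\Gamma \Vdashv \LTm : \beta$ satisfies $\beta \in \oc\cU$, and moreover in the only way to type an abstraction, the codomains $b_i$ appearing in the type $[a_1\!\lto\! b_1,\dots,a_k\!\lto\! b_k]$ are themselves multisets; since $\alpha$ is not a multiset, no $\Vdashv$-derivation can output $[([a\!\lto\!\alpha]+a)\lto\alpha]$ for $\LTmTwo$.

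The easy inclusion is essentially bookkeeping; the mildly delicate point is the counterexample, where one must justify the negative statement $[([a\!\lto\!\alpha]+a)\lto\alpha] \notin \Intv{\LTmTwo}$ by a structural argument on $\Vdashv$-derivations (rather than by a direct computation). A clean formulation is to establish, as an auxiliary lemma by induction on derivations, that the image of $\Vdashv$ lies inside a restricted subset of $\oc\cU$ where every nested right-hand side of an arrow is a multiset; this is the only place where the interaction between the two systems needs to be spelled out in detail.
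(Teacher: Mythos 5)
Your proposal is correct and follows essentially the same route as the paper: the inclusion is proved by induction, replacing each $\Vdashv$-rule with the corresponding derived $\vdash_\oc$-rule from \eqref{eq:inference-cbv} (using $\oc\cU \subseteq \cU$), and the separation uses the same witness $\La{\Var}{\Var\Var}$ with the same type $[([a\!\lto\!\alpha]+a)\lto\alpha]$ for $\alpha \in \cU \smallsetminus \oc\cU$. If anything, you are more explicit than the paper (which dismisses the counterexample as ``easy calculations'') in justifying the negative half via the structural observation that every codomain produced by the $\mathrm{lam}$-rule of $\Vdashv$ is a multiset.
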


\begin{proof}
	We have just shown that $\Intv{\LTmTwo} \neq \Psem{\Cbv {\LTmTwo}}$ for $\LTmTwo = \La{\Var}{\Var\Var}$.
	To prove that $\Intv{\LTm}_{\seq \Var} \subseteq \Psem{\Cbv{\LTm}}_{\seq \Var}$, it is enough to show, by induction on $\LTm \in \Lambda$, that $\Var_1 \!:\! a_1,\dots, \Var_n \!:\! a_k \vdash_\oc \Cbv{\LTm} \!:\! \beta $ is derivable whenever $\Var_1 \!:\! a_1,\dots, \Var_k \!:\! a_k \Vdashv \LTm \!:\! \beta $ is.
	
	If $\LTm$ is a variable, then $\LTm = \Var_i$ for some $1 \leq i \leq k$, and $\Cbv{\LTm} = \Bang{\Var_i}$. 
	All derivations for $\LTm$ in the type system $\Vdashv$ are of the form    
	\begin{prooftree}[rule margin = 0.4ex, center=false, label separation= 0.2em]
		\infer0[\footnotesize{ax}]{\Var_i : a \Vdashv \Var_i : a}
	\end{prooftree}
	for any $a \in \oc\cU$, and in the type system $\vdash_\oc$, according to \eqref{eq:inference-cbv},
	\begin{prooftree}[label separation = 0.2em, rule margin = 0.4ex, center=false]
		\infer0[\footnotesize{ax}]{\Var_i \!: a \vdash_\oc \Bang{\Var_i} \!: a}
	\end{prooftree}
	is derivable.
	
	If $\LTm = \LTmTwo\LTmThree$, then $\Cbv{\LTm} = \App{\Der{\,\Cbv{\LTmTwo}}}{\Cbv{\LTmThree}}$ and all derivations for $\LTm$ in the type system $\Vdashv$ are of the form
	\begin{equation*}
		\begin{prooftree}[rule margin = 0.4ex]
		\hypo{\Gamma\Vdashv \LTmTwo \!: [a\!\lto\! b] }
		\hypo{\Delta\Vdashv \LTmThree \!: a }    
		\infer2[\footnotesize{$\mathrm{app}$}]{\Gamma + \Delta\Vdashv \LTmTwo\LTmThree \!: b}
		\end{prooftree}
		\qquad \text{for any } a,b \in \oc\cU.
	\end{equation*}
	By \Ih, $\Gamma\vdash_\oc \Cbv{\LTmTwo} \!: [a\!\lto\! \beta]$ and $\Delta\vdash_\oc \Cbv{\LTmTwo} \!: a$ are derivable in the type system $\vdash_\oc$, hence the following derivation is derivable in the type system $\vdash_\oc$, according to \eqref{eq:inference-cbv} since $\oc\cU \subseteq \cU$
	\begin{equation*} 
	\begin{prooftree}[rule margin = 0.4ex, label separation = 0.2em, center=false]
		\hypo{\Gamma\vdash_\oc \Cbv{\LTmTwo} \!: [a\!\lto\! b] }
		\hypo{\Delta\vdash_\oc \Cbv{\LTmThree} \!: a }    
		\infer2[\footnotesize{$\mathrm{app}$}]{\Gamma + \Delta\vdash_\oc \App{\Der{\,\Cbv{\LTmTwo}}}{\Cbv{\LTmThree}} \!: b}
	\end{prooftree}.
	\end{equation*}
	
	If $\LTm = \La{\VarTwo}{\LTmTwo}$, then $\Cbv{\LTm} = \Bangp{\La{\VarTwo}{\Cbv{\LTmTwo}}}$ and all derivations for $\LTm$ in the type system $\Vdashv$ are of the form
	\begin{equation*}
	\begin{prooftree}[rule margin = 0.4ex]
	\hypo{(\Gamma_i, \VarTwo \!:a_i \Vdashv \LTmTwo \!: b_i)_{1 \leq i \leq k} \quad k \ge 0}
	\infer1[\footnotesize{$\mathrm{lam}$}]{\sum_{i=1}^k \Gamma_i \Vdashv \La{\!\VarTwo\,}{\LTmTwo} \!: [a_1\!\lto\! b_1,\dots,a_k\!\lto\! b_k]}
	\end{prooftree}
	\qquad \text{for any } a_1, b_1, \dots, a_k, b_k \in \oc\cU.
	\end{equation*}
	By \Ih, $\Gamma_i, \VarTwo \!: a_i \vdash_\oc \Cbv{\LTmTwo} \!: b_i$ is derivable in the type system $\vdash_\oc$ for all $1 \leq i \leq k$, hence the following derivation is derivable in the type system $\vdash_\oc$, according to \eqref{eq:inference-cbv} since $\oc\cU \subseteq \cU$
	\begin{equation*}
	\begin{prooftree}[rule margin = 0.4ex, label separation = 0.2em, center=false]
	\hypo{(\Gamma_i, \VarTwo \!:a_i \vdash_\oc \Cbv{\LTmTwo} \!: b_i)_{1 \leq i \leq k}\quad k \ge 0}
	\infer1[\footnotesize{$\mathrm{lam}$}]{\sum_{i=1}^k \Gamma_i \vdash_\oc \Bang{(\La{\!\VarTwo\,}{\Cbv{\LTmTwo}})} \!: [a_1\!\lto\!b_1,\dots,a_k\!\lto\!b_k]}
	\end{prooftree}.
	\qedhere
	\end{equation*}
\end{proof}

The example above of $\Intv{\LTmTwo} \neq \Psem{\Cbv {\LTmTwo}}$ shows also that in general neither $\Psem{\Cbv \LTm}_{\seq \Var} = \Pair{\Intv{\LTm}}{0}_{\seq \Var}$ nor $\Proj 1\Psem{\Cbv \LTm}_{\seq \Var} = \Intv{\LTm}_{\seq \Var}$ hold in relational semantics.\footnotemark
\footnotetext{Relational semantics interprets terms in the object $\cU$\,---\,defined in \eqref{eq:types}, where $a\lto\alpha$ denotes the ordered pair $(a,\alpha)$\,---\,of the category $\mathbf{Rel}$ of sets and relations.
The cartesian product $\with$ is the disjoint union, with the empty set as terminal and initial object $\top = 0$, so that the zero morphism $0_{X,Y}$ for any objects $X$ and $Y$ is the empty relation and the projection $\Proj{i}$ is the obvious selection.
Therefore, in relational semantics, $\Pair{\Intv{\LTm}_{\seq \Var}}{0} = \Intv{\LTm}_{\seq \Var}$ and $\Proj 1\Psem{\Cbv \LTm}_{\seq \Var} = \Psem{\Cbv{\LTm}}_{\seq \Var}$ for any $\lambda$-term $\LTm$.
}
We 
conjecture that, for any $\lambda$-term $\LTm$, $\Intv{\LTm}_{\seq \Var}$ can be obtained from $\Psem{\Cbv \LTm}_{\seq \Var}$ by iterating the application of $\Proj 1$ to $\Psem{\cdot}$ along the structure of $\LTm$, but how to express this formally and categorically for a generic model $\cU$ of the bang calculus?
Usually in these situations one defines a logical relation between the two interpretations, but this is complicated by the fact that we are in the untyped setting so there is no type hierarchy to base our induction.
We plan to investigate whether the (syntactic) logical relations introduced by Pitts in~\cite{Pitts93} can give an inspiration to define semantic logical relations in the untyped setting. 
Another source of inspiration might be the study of other concrete $\LL$ based models of the \CbV\ $\lambda$-calculus, such as Scott domains and coherent semantics \cite{PravatoRonchiRoversi99, Ehrhard12}.

 %

\section{Conclusions}
\label{sect:conclusions}

The bang calculus is a general setting to study and compare \CbN\ and \CbV\ $\lambda$-calculi in the \emph{same} rewriting system and with the \emph{same} denotational semantics, as we have shown.
Since \CbN\ and \CbV\ \lam-calculi are usually investigated as \emph{two} different rewriting systems with \emph{two} distinct semantics, the study of the bang calculus can be fruitful 
because it provides a more general, canonical and unifying setting where:
\begin{itemize}
	\item  operational and denotational notions and properties (such as models, continuations, standardization, normalization strategies, equational theories induced by denotational models, \Etc) can be introduced and investigated, 
	so that one can obtain their \CbN\ and \CbV\ counterparts by just restricting the general notion or result for the bang calculus to the \CbN\ and \CbV\ fragments of the bang calculus;
	\item in particular, many well studied theoretical notions of the \CbN\  $\lambda$-calculus that do not have satisfactory \CbV\ counterparts yet (such as	separability, solvability, B\"ohm trees, classification of $\lambda$-theories, full-abstraction, \Etc) might be generalized in the bang calculus, so as to obtain their \CbV\ counterparts when restricted to the \CbV\ fragment of the bang calculus.
\end{itemize}

\paragraph{Acknowledgment.} The authors are grateful to Thomas Ehrhard and Claudia Faggian for helpful comments and stimulating discussions.

\bibliographystyle{eptcs}
\bibliography{biblio}
\addcontentsline{toc}{section}{References}

%
\end{document}